\documentclass[aps,prx,twocolumn]{revtex4-2}
\usepackage{amsmath,amsfonts,amssymb,amsthm,bm}
\usepackage{bbm,nccmath,accents}
\usepackage{stmaryrd,mathrsfs,mathtools}
\usepackage{tikz}
\usetikzlibrary{cd,calc}
\usepackage{array} 
\usepackage{multirow}
\usepackage{wasysym}
\usepackage{enumitem}
\usepackage[most]{tcolorbox}
\usepackage{hyperref}
\usepackage{color}
\usepackage{titlesec}
\usepackage[leqno,fleqn,intlimits]{empheq}  
\usepackage{atbegshi}
\usepackage{babel}

\newcommand*{\widefbox}[1]{ \fbox{\hspace{2em}#1\hspace{2em}} }

\tikzset{
  symbol/.style={
    draw=none,
    every to/.append style={
      edge node={node [sloped, allow upside down, auto=false]{$#1$}}}
  }
}

\def\presuper#1#2%
  {\mathop{}%
   \mathopen{\vphantom{#2}}^{#1}%
   \kern-\scriptspace%
   #2}
   
\newcommand{\paren}[1]{{\normalfont{(}}{#1}{\normalfont{)}}}

\newcommand{\defeq}{ {\kern 0.2em}:{\kern -0.5em}={\kern 0.2 em} }  
\newcommand{\eqdef}{ {\kern 0.2em}={\kern -0.5em}:{\kern 0.2 em} }  
\newcommand{\setof}[2]{\left\{ #1 \,\colon\, #2 \right\}}  

\newcommand{\tpose}{\prescript{t}{}}

\DeclareMathOperator{\rk}{rk}  
\DeclareMathOperator{\Id}{\mathbbm{1}} 

\newcommand{\Real}{{\mathbb R}}    
\newcommand{\Cmplx}{{\mathbb C}}   

\newcommand{\inpr}[2]{\langle {#1} \,|\, {#2} \rangle}

\newcommand{\Dket}[1]{\left| {#1} \right\rangle}                
\newcommand{\Dbra}[1]{\left\langle {#1} \right|}                
\DeclareMathOperator{\Tr}{\mathrm{Tr}}

\newcommand{\ee}{\mathsf{e}}
\newcommand{\kk}{\mathsf{k}}
\newcommand{\hh}{\mathsf{h}}
\newcommand{\mm}{\mathsf{m}}
\newcommand{\nn}{\mathsf{n}}

\newcommand{\cE}{\mathcal{E}}
\newcommand{\cH}{\mathcal{H}}
\newcommand{\cK}{\mathcal{K}}

\newcommand{\cV}{\mathcal{V}}

\newcommand{\cQ}{\mathcal{Q}}
\newcommand{\cM}{\mathcal{M}}
\newcommand{\cN}{\mathcal{N}}

\newcommand{\gC}{\mathfrak{C}}
\newcommand{\gJ}{\mathfrak{J}}

\newcommand{\gS}{\mathfrak{S}}

\newcommand{\gT}{\mathfrak{T}}

\newcommand{\Blin}%
{\mathsf{L}}

\newcommand{\CP}{\ensuremath{\mathsf{CP}}}

\newcommand{\sJ}{\mathscr{J}}

\newcommand{\jam}{\gJ}
\newcommand{\pam}{\overline}

\newcommand{\bijam}{ \;\xleftrightarrow{\;\jam\;}\; } 

\newcommand{\blank}{\,\square\,}

\newcommand{\bpi}{\bm{\pi}}
\newcommand{\btheta}{{\bm\theta}}
\DeclareMathOperator{\extr}{\mathrm{extr}}

\newcommand{\vJ}{{\mathcal J}}
\newcommand{\onbsum}[1]%
           { \sum_{{#1}}^{\scriptscriptstyle{\text{ONB}}} }

\theoremstyle{plain}
\newtheorem{thm}{Theorem}

\newtheorem{lem}[thm]{Lemma}
\newtheorem{prop}[thm]{Proposition}

\newtheorem*{lem*}{Lemma}
\newtheorem*{cor*}{Corollary}
\newtheorem*{thm*}{Theorem}
\newtheorem*{prop*}{Proposition}
\theoremstyle{definition}
\newtheorem{bareDefinition}{Definition}[section]
\newtheorem*{bareDefinition*}{Definition}

\newenvironment{definition*}
    {\begin{bareDefinition*}}
    {\hfill$\dashv$ \end{bareDefinition*}}

\newtheorem{bareExample}{Example}[section]

\theoremstyle{remark}

\newtheorem*{remark*}{Remark}

\let\origmaketitle\maketitle
\def\maketitle{
  \begingroup
  \def\uppercasenonmath##1{} 
  \let\MakeUppercase\relax 
  \origmaketitle
  \endgroup
}
\makeatletter
\newcommand*{\textoverline}[1]{$\overline{\hbox{#1}}\m@th$}
\makeatother
\newcolumntype{L}{>{$}l<{$}}
\newcolumntype{R}{>{$}r<{$}}
\newcolumntype{C}{>{$}c<{$}}
\newtcolorbox{stdbox}{
center,
width=1.0\linewidth,
halign=center,
colframe=black!40!white,
colback=white,
toprule=0.8pt,
bottomrule=0.8pt,
leftrule=0.8pt,
rightrule=0.8pt
}

\newsavebox\boxpi
\sbox\boxpi{\ensuremath{\bpi}}
\newcommand{\pibox}{\usebox{\boxpi}}
\newsavebox\boxtheta
\sbox\boxtheta{\ensuremath{\btheta}}
\newcommand{\thetabox}{\usebox{\boxtheta}}

\begin{document}

\title{Choi-Jamio{\l}kowski isomorphism(s) and channel/state duality,
naturally!}
  \author{Paul~E.~Lammert}
 \email{lammert@psu.edu}
\affiliation{
  Department of Physics \\ 
Pennsylvania State University \\ University Park, PA 16802-6300}
\date{\today}
\begin{abstract}

Provided a maximally entangled state vector in $\cH\otimes\cH$,
the {Choi-Jamio{l}kowski isomorphism} of quantum information theory
(or {channel/state duality}) produces a mapping of $\CP(\cH,\cK)$
(completely positive maps of density operators on $\cH$ to
those on $\cK$) onto the density operators on $\cK\otimes\cH$,
which extends to a unitary map. Here, $\cH$ and $\cK$ are
\textit{finite-dimensional} Hilbert spaces.
The nature of this construction, the form of the associated formulas,
and the possibility of versions which do not require an input
entangled vector, are illuminated from the perspective of \textit{naturality},
in the sense of Eilenberg and MacLane.
There is a unique such correspondence (``Choi map'') natural in both
$\cH$ and $\cK$ which produces density operators on either $\Blin(\cH,\cK)$
(Hilbert-Schmidt inner product) or $\cK\otimes\cH^*$ instead of
$\cK\otimes\cH$. In particular, no entangled state input is required.
There are no such \textit{binatural} maps carried on $\cK\otimes\cH$,
but those natural in only $\cK$ are canonically associated with
unitary maps $\cH\rightarrow\cH^*$ and derive from the binatural maps.
The construction recovers the standard Choi-Jamio{l}kowski formulas
and shows how they express naturality.
\end{abstract}
\maketitle

\tableofcontents
\section{Introduction}

Completely positive ($\CP$) maps are
a cornerstone of the theory of open quantum systems because they
constitute the unique class of maps on density operators which behave 
acceptably vis-\`a-vis composite systems.
However, getting a solid grip on them and their properties directly
is difficult.
Choi-Jamio{\l}kowski isomorphism and related notions of Choi matrix and
channel/state duality\cite{%
    de-Pillis-67,
    Jamiolkowski-72,
    Choi-75,
    Zyczkowski+Bengtsson-04,
    Arrighi+Patricot-04,
    Bengtsson+Zyczkowski-Book,
    Ranade+Ali-07,
    Grabowski+Kus+Marmo-07,
  Jiang+Luo+Fu-13,
  Majewski+Tylec-13b,
  Preskill-course,
  Wilde,
  D'Ariano+Chiribella+Perinotti-17,
  Milz+Pollock+Modi-17,
  Benatti-23,
  Vacchini-Foundations,
  Frembs+Cavalcanti-24,
Homa+Ortega+Koniorczyk-24%
}
provide valuable help with this problem.
Here is a typical statement.
  \begin{thm*}[CJ]
    With $\bpi(\Upsilon)$ the projection on a maximally entangled state
    $\Upsilon$ in \mbox{$\cH\otimes\cH$},
\begin{equation}\label{eq:usual Choi}
\begin{split}
&      \vJ_\Upsilon\colon \Blin(\Blin(\cH),\Blin(\cK)) \rightarrow 
      \Blin(\cK\otimes\cH) \\
&      \vJ_\Upsilon\Lambda = \Lambda\otimes\Id_{\Blin(\cH)}[\bpi(\Upsilon)]
\end{split}
\end{equation}
is an isomorphism.
Its inverse is given by,
    \begin{equation}\label{eq:usual Choi inverse}
     (\vJ_\Upsilon^{-1}\Psi)[Y] = n\Tr_{\cH}
        \left\{(\Id_{\cK}\otimes Y^T)\Psi\right\}
    \end{equation}
for \mbox{$Y\in\Blin(\cH)$}.

Furthermore, the map $\Lambda$ is CP if and only if the Choi matrix
    $\vJ_\Upsilon\Lambda$ is positive, and.
  \end{thm*}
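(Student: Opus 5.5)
The plan is to fix an orthonormal basis $\{e_i\}_{i=1}^n$ of $\cH$ adapted to $\Upsilon$, so that $\Upsilon=n^{-1/2}\sum_i e_i\otimes e_i$. Any maximally entangled vector can be brought to this form: its Schmidt decomposition has all coefficients equal to $n^{-1/2}$, and a local unitary on the second factor then aligns the two Schmidt bases. The transpose $Y^T$ in \eqref{eq:usual Choi inverse} is understood relative to this basis. With $E_{ij}=|e_i\rangle\langle e_j|$ we have
\begin{equation*}
\bpi(\Upsilon)=n^{-1}\sum_{ij}E_{ij}\otimes E_{ij},
\qquad
\vJ_\Upsilon\Lambda=n^{-1}\sum_{ij}\Lambda[E_{ij}]\otimes E_{ij}.
\end{equation*}

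The first step is to establish the isomorphism and verify the inverse formula. Since the $E_{ij}$ form a basis of $\Blin(\cH)$, $\Lambda$ is determined by the blocks $\Lambda[E_{ij}]\in\Blin(\cK)$. Conversely, every $\Psi\in\Blin(\cK\otimes\cH)$ decomposes uniquely as $\sum_{ij}\Psi_{ij}\otimes E_{ij}$. Hence $\vJ_\Upsilon$ is a linear bijection between spaces of equal dimension $n^2(\dim\cK)^2$.

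To check \eqref{eq:usual Choi inverse}, I would compute
\begin{equation*}
\Tr_\cH\{(\Id_\cK\otimes E_{kl}^T)(\Psi_{ij}\otimes E_{ij})\}
=\Psi_{ij}\Tr(E_{lk}E_{ij})
=\delta_{ki}\delta_{lj}\Psi_{ij}.
\end{equation*}
The right-hand side of \eqref{eq:usual Choi inverse} applied to $Y=E_{kl}$ therefore equals $n\cdot n^{-1}\Lambda[E_{kl}]$ when $\Psi=\vJ_\Upsilon\Lambda$. Linearity in $Y$ finishes this part.

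The second step is the positivity criterion. If $\Lambda$ is CP, then $\Lambda\otimes\Id_{\Blin(\cH)}$ is positive by definition, so it maps the positive operator $\bpi(\Upsilon)$ to a positive operator.

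The converse (Choi's theorem proper) is the main step. Given $\vJ_\Upsilon\Lambda\ge 0$, I would spectrally decompose it as $\sum_\alpha |v_\alpha\rangle\langle v_\alpha|$ with $v_\alpha\in\cK\otimes\cH$. Each $v_\alpha$ can be written as $n^{-1/2}\sum_i (K_\alpha e_i)\otimes e_i$ for a unique $K_\alpha\in\Blin(\cH,\cK)$, using the vectorization isomorphism defined by the basis. Then
\begin{equation*}
\vJ_\Upsilon\Lambda=n^{-1}\sum_{ij}\Bigl(\sum_\alpha K_\alpha E_{ij}K_\alpha^\dagger\Bigr)\otimes E_{ij}.
\end{equation*}
Injectivity from the first step gives the Kraus form $\Lambda[Y]=\sum_\alpha K_\alpha YK_\alpha^\dagger$. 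Any map of Kraus form is CP, because $(\Lambda\otimes\Id_{\Blin(\cH')})[X]=\sum_\alpha (K_\alpha\otimes\Id)X(K_\alpha\otimes\Id)^\dagger$ is positive for every ancilla $\cH'$.

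The only delicate point is bookkeeping the transpose and the basis dependence. Both are resolved by working in the Schmidt basis of $\Upsilon$ and noting that changing $\Upsilon$ changes $\vJ_\Upsilon$ by a local unitary conjugation.
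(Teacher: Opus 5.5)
Your block-matrix argument is the standard basis proof and is correct for $\Upsilon_0=n^{-1/2}\sum_i e_i\otimes e_i$. The gap is your opening claim that you can choose an orthonormal basis with $\Upsilon=n^{-1/2}\sum_i e_i\otimes e_i$; it is false, and it breaks the inverse formula. Any vector $c\sum_i e_i\otimes e_i$ is invariant under swapping the two tensor factors. So a maximally entangled $\Upsilon$ has that form only if it is swap-invariant, and the singlet $(e_1\otimes e_2-e_2\otimes e_1)/\sqrt2$ is not. A local unitary on the second factor does align the Schmidt bases, but it replaces $\Upsilon$ by a different vector, and hence $\vJ_\Upsilon$ by a different map. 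Concretely, take $\Upsilon=n^{-1/2}\sum_i f_i\otimes g_i$ and let $W$ be the unitary with $Wf_i=g_i$, so that $\Upsilon=(\Id\otimes W)\Upsilon_0$ with $e_i=f_i$. Your own block computation shows that $n\Tr_\cH\{(\Id_\cK\otimes Z)\vJ_\Upsilon\Lambda\}=\Lambda[Y]$ holds for all $\Lambda$ exactly when $Z=WY^TW^\dag$, where $Y^T$ is the transpose relative to $\{f_i\}$. This map is the transpose relative to \emph{some} orthonormal basis only when $W$ is symmetric, i.e.\ when $\Upsilon$ is swap-invariant. For the singlet it is $Y\mapsto(\Tr Y)\Id-Y$, which fixes no pure state, so no basis transpose makes the inverse formula true. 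For example, with $\Lambda=\Id$ and $Y=E_{11}$, the ordinary transpose in $\{e_1,e_2\}$ returns $E_{22}$.

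The repair is the one your last sentence gestures at, but it has to be carried into the inverse formula. Since $\Lambda\otimes\Id_{\Blin(\cH)}$ commutes with conjugation by $\Id\otimes W$, you get $\vJ_\Upsilon\Lambda=(\Id\otimes W)(\vJ_{\Upsilon_0}\Lambda)(\Id\otimes W)^\dag$. Bijectivity and the positivity criterion therefore transfer from $\Upsilon_0$ verbatim. Cyclicity of $\Tr_\cH$ then turns the inverse formula for $\Upsilon_0$ into the one for $\Upsilon$, with $Y^T$ read as the $\Upsilon$-adapted transpose $WY^TW^\dag$. This is exactly the paper's remark that the dependence on $\Upsilon$ is hidden in the transpose.

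The paper itself never chooses a basis. In \S\ref{sec:CV} the conventional map is $n^{-1}\gC_V$ with $\Upsilon=n^{-1/2}\gT(V^\dag)$ for a unitary $V\colon\cH\to\cH^*$, and every maximally entangled vector arises this way. The form $(\Lambda\otimes\Id)[\bpi(\Upsilon)]$ and the inversion formula, with $V^\dag\,\tpose{Y}\,V$ in the role of $Y^T$, are Yoneda formulas expressing naturality in $\cK$. CP $\Leftrightarrow$ positivity is inherited from $\gJ$, for which the sharper statement ``$\Lambda$ is $N$-monotone iff $\gJ\Lambda$ is rank-$N$ positive'' is proved by purification. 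Kraus decompositions come out as in your converse, by pulling a spectral decomposition $\sum\bpi(A_n)$ back to $\sum\btheta(A_n)$.

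Once the transpose is fixed, your route is a correct and more elementary alternative. The paper's route explains where the formulas come from and treats all $\Upsilon$ uniformly.
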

  Evidently, $\vJ_\Upsilon$ depends on some arbitrary choice of an
  entangled state. In (\ref{eq:usual Choi inverse}), this dependence
  is hidden in the transpose operation.
  Are there transformations which achieve the \textit{same thing}, but 
  invariantly, or \textit{naturally}? Or, even without that virtue?
  Last, but not least, where do the formulas
  (\ref{eq:usual Choi},\ref{eq:usual Choi inverse}) really come from?
  These questions are related, and the crux of the matter
  is an explanation of what ``same thing'' and ``natural'' mean. 
To get started, consider this proposal for ``same thing''.
\begin{definition*}
An $(\cH,\cK)$-\textbf{Choi map} 
is a \textit{unitary} mapping
\begin{equation}\label{eq:Choi map}
\Blin(\Blin(\cH),\Blin(\cK)) \rightarrow \Blin(\cM),
\end{equation}
which maps the completely positive maps $\CP(\cH,\cK)$ into
the positive operators $\Blin(\cM)^+$ on some Hilbert space $\cM$,
called the \textbf{carrier} of the Choi map.
\end{definition*}
(Here, $\Blin(\Blin(\cH),\Blin(\cK))$
is a Hilbert space via the Hilbert-Schmidt inner product,
as reviewed in Section \ref{sec:Hilbert}).
The carrier of the conventional Choi-Jamio{l}kowski isomorphism
occuring in the CJ Theorem is $\cK\otimes\cH$.

But surely, the reader may be wont to object --- and rightly so --- this is an
excessive broadening. Should we not demand the carrier be \textit{naturally}
(that word again) related to $\cH$ and $\cK$?
No need for worry. The following proposition explains in what sense such
permissiveness does not allow anything essentially new, anyway.
\begin{prop}\label{prop:Choi}
  If $S$ and $T$ are $(\cH,\cK)$-Choi maps with carriers
  $\cM$ and $\cN$, respectively, then
  there is a unitary $U\colon\cN\rightarrow\cM$
  such that
  for $\Lambda$ in $\Blin(\Blin(\cH),\Blin(\cK))$,
  $S\Lambda = U(T\Lambda)U^\dag$.
\end{prop}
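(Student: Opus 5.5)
The plan is to reduce the statement to a structure theorem for positivity-preserving linear bijections between operator spaces. Put $\Phi \defeq S\circ T^{-1}\colon \Blin(\cN)\rightarrow\Blin(\cM)$. Since $S$ and $T$ are unitary, $\Phi$ is unitary for the Hilbert--Schmidt inner products, and in particular $\dim\cM=\dim\cN$. It suffices to show $\Phi X = UXU^\dag$ for a unitary $U$; then $S\Lambda=\Phi(T\Lambda)=U(T\Lambda)U^\dag$.

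\textbf{Step 1: $T$ maps $\CP(\cH,\cK)$ onto the whole cone $\Blin(\cN)^+$.} For a subset $C$ of a complex Hilbert space, write $C^\circ=\setof{x}{\inpr{x}{c}\ge 0 \text{ for all } c\in C}$, where $\inpr{x}{c}\ge0$ means the inner product is real and nonnegative. For the positive cone $\Blin(\cN)^+$, the condition $\Tr(X^\dag c)\ge 0$ for all $c\ge 0$ forces $X\ge 0$. Hence $(\Blin(\cN)^+)^\circ=\Blin(\cN)^+$.

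Next, $\CP(\cH,\cK)$ is self-dual in $\Blin(\Blin(\cH),\Blin(\cK))$. To see this, use the CJ Theorem: $\vJ_\Upsilon$ maps $\CP$ onto $\Blin(\cK\otimes\cH)^+$, and $\sqrt{n}\,\vJ_\Upsilon$ is unitary, which is a routine check with an orthonormal basis. Self-duality of the positive cone then transfers to $\CP$.

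Because $T$ is unitary, $C\defeq T(\CP)$ is self-dual in $\Blin(\cN)$. From $C\subseteq\Blin(\cN)^+$ we get
\[
C=C^\circ\supseteq(\Blin(\cN)^+)^\circ=\Blin(\cN)^+\supseteq C ,
\]
so $C=\Blin(\cN)^+$. The same argument gives $S(\CP)=\Blin(\cM)^+$. Hence $\Phi$ is a linear bijection carrying $\Blin(\cN)^+$ onto $\Blin(\cM)^+$.

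\textbf{Step 2: structure of $\Phi$.} A linear bijection preserving positive cones in both directions maps extreme rays onto extreme rays, that is, rank-one projections to positive multiples of rank-one projections. It also preserves the order and the orthogonality relations among them. By the standard (Schneider / Kadison--Wigner type) theorem, $\Phi X = AXA^\dag$ or $\Phi X=AX^TA^\dag$ for some invertible $A\colon\cN\rightarrow\cM$, where the transpose is taken in some basis.

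Hilbert--Schmidt unitarity then fixes $A$. Requiring $\Tr\bigl((A^\dag A)X^\dag(A^\dag A)Y\bigr)=\Tr(X^\dag Y)$ for all $X,Y$ forces $A^\dag A=\Id$, so $A$ is unitary.

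\textbf{Main obstacle.} The hard part is excluding the transpose alternative. The map $X\mapsto X^T$ is itself Hilbert--Schmidt unitary and preserves positivity. So for a given Choi map $T$, the composite $X\mapsto (TX)^T$ is a Choi map satisfying the hypotheses as literally stated, and it need not be unitarily conjugate to $T$.

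To handle this, I would first check whether the paper's intended notion, or the conclusion, allows $S\Lambda=U(T\Lambda)^TU^\dag$, equivalently conjugation by an antiunitary combined with the adjoint. If it does not, I would look for an additional piece of structure that distinguishes $X$ from $X^T$. One candidate is compatibility with the product structure (complete positivity of $\Phi$ itself). I would then show that such a hypothesis excludes the transpose branch, since the transpose is not completely positive. Absent such a hypothesis, I would state the result with the transpose alternative included.
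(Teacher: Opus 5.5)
The obstacle you flag is not a gap in your argument. It is a counterexample to the statement as written, and it sits exactly where the paper's own proof goes wrong.

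The paper follows your route. Its first appendix lemma is your Step 1, the self-dual cone argument. You are in fact more careful here: the paper never states that $\CP(\cH,\cK)$ is self-dual, which it needs in order to get $T(\CP(\cH,\cK))=\Blin(\cN)^+$. That fact follows from $\gJ$, or from $\vJ_\Upsilon$ as you do. One harmless slip: it is $n\,\vJ_\Upsilon$, not $\sqrt{n}\,\vJ_\Upsilon$, that is unitary.

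Its second lemma is your Step 2. The paper feeds $|\inpr{\hh}{\hh'}|^2=\inpr{\bpi(\hh)}{\bpi(\hh')}$ directly into Wigner's theorem, which yields a unitary or antiunitary $U$ at once. This avoids Schneider's theorem and your $A^\dag A=\Id$ computation.

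The paper then discards the antiunitary branch with ``It must be unitary since $\Gamma$ is linear.'' That inference is invalid. The $\bpi(\hh)$ are self-adjoint, so a complex-linear map and the antilinear $\btheta(U)$ agreeing on them are forced to agree only on their real span, which is just the self-adjoint operators. For antiunitary $U$, the map $X\mapsto UX^\dag U^{-1}$ has the following properties:
\begin{itemize}
\item it is complex-linear;
\item it coincides with $X\mapsto UXU^{-1}$ on self-adjoint $X$, so it sends $\bpi(\hh)$ to $\bpi(U\hh)$;
\item it is Hilbert--Schmidt unitary and positivity preserving.
\end{itemize}
Writing $U=W\kappa$, with $\kappa$ complex conjugation in an orthonormal basis and $W$ unitary, this map is just $X\mapsto WX^TW^\dag$.

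Consequently your $S\defeq\tau\circ T$, with $\tau$ the transpose in an orthonormal basis of $\cN$, really is a Choi map violating the conclusion. Since $T$ is onto $\Blin(\cN)$, the conclusion would force $X^T=UXU^\dag$ for all $X\in\Blin(\cN)$. That is impossible once $\dim\cN=\dim\cH\,\dim\cK\ge 2$, because transposition reverses products while conjugation preserves them. You say $S$ ``need not be'' unitarily conjugate to $T$; in fact it never is.

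There is no hidden hypothesis to look for. The definition of Choi map asks only for unitarity and for $\CP(\cH,\cK)$ to land in $\Blin(\cM)^+$, and that class is closed under composition with $\tau$. What your Steps 1--2 prove, and what the paper's lemmas prove once the antiunitary branch is kept, is the corrected dichotomy: there is a unitary $U\colon\cN\to\cM$ such that either $S\Lambda=U(T\Lambda)U^\dag$ for all $\Lambda$, or $S\Lambda=U(T\Lambda)^TU^\dag$ for all $\Lambda$. State the result that way. Recovering the original form requires an additional hypothesis, such as complete positivity of $S\circ T^{-1}$, as you suggest.
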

So, all $(\cH,\cK)$-Choi maps are essentially equivalent.
Yet, not all are equally natural. Are any natural?
In fact, there are two which are both simple,
and \textit{natural} in the sense famously expounded by Eilenberg
and MacLane\cite{Eilenberg+MacLane-45}.
This notion of naturality is close kin to the principle (Curie's, more or less)
that the symmetries of relevant laws must be reflected in cause-effect
linkages.
In the current context, transformation under mappings between,
possibly distinct, underlying Hilbert spaces are important.
The two mentioned Choi maps, $\gJ$ and $\gC$, have carrier
spaces $\Blin(\cH,\cK)$ and $\cK\otimes\cH^*$.
In fact, they are the unique Choi maps having
those carriers which are natural in both $\cH$ and $\cK$.
Formulas resembling (\ref{eq:usual Choi}) and (\ref{eq:usual Choi inverse})
hold for $\gC$; they are straightforwardly expressions of naturality.
Furthermore, $\gC$ can be simply modified with the aid of
a unitary $V\colon\cH\rightarrow\cH^*$ to obtain a Choi map 
$\gC_V$ with carrier $\cK\otimes\cH$.
It is essentially $\vJ_\Upsilon$
with $\Upsilon$ the entangled state determined by $V$.
$\gC_V$ cannot be natural in $\cH$, because that is locked down,
but it is natural in $\cK$. Moreover, the $\gC_V$, as $V$ varies over
unitaries $\cH\rightarrow\cH^*$, exhaust the natural-in-$\cK$ Choi maps
with carrier $\cK\otimes\cH$.

Conventional CJ isomorphisms 
achieve strict channel/state duality. That is, applied to a $\CP$ map
which is also trace-preserving, the result is a properly normalized
density matrix. Since this is an important feature for physics, the
reader may be worried whether our more capacious notion of Choi map
shares it. Fear not: the conclusion of Proposition~\ref{prop:Choi} shows
that all $(\cH,\cK)$-Choi maps are equally good (or bad) in this respect.

\subsubsection*{Brief guide to the contents}

Here is a brief overview of the paper body.
Section \ref{sec:Hilbert} reviews the Hilbert space structures on
dual spaces and $\Blin(\cH,\cK)$. Use of the Hilbert-Schmidt inner product
here means that our considerations are restricted to finite dimension.
Section \ref{sec:cats} explains the notion of naturality,
without assuming any knowledge of category theory but exposing
just what is needed.
Section \ref{sec:T} demonstrates our methods on the natural unitary
\mbox{$\gT\colon\Blin(\cH,\cK)\rightarrow\cK\otimes\cH^*$},
which is an important building block. $\gJ$ is introduced in
\S\ref{sec:J} and shown to be a Choi map. Section \ref{sec:pi and theta}
shows that it is unique with carrier $\Blin(\cH,\cK)$ and natural
in both $\cH$ and $\cK$. Section \ref{sec:C} shows how $\gC$ is
obtained by combining $\gJ$ and $\gT$, and proves its
just-mentioned properties.
\S\ref{sec:CV} brings us back to the point of departure,
namely the formulas (\ref{eq:usual Choi},\ref{eq:usual Choi inverse})
of the so-called Choi-Jamio{l}kowski isomorphism.
Normally, neither of these is \textit{derived} in any meaningful sense.
They arrive as inspired guesses and are then shown to satisfy what
is claimed of them.
Section \ref{sec:CV} shows that all Choi maps with carrier
$\cK\otimes\cH$ and natural in $\cK$ derive from $\gC$ via a unitary
$\cH\rightarrow \cH^*$. These are precisely the maps of the CJ Theorem,
and the formulas then follow as a consequence of naturality.
\S~\ref{sec:trace} discusses the trace properties and shows that
all $(\cH,\cK)$-Choi maps send trace-preserving maps to operators of
trace $\dim\cH$.
Proof of Prop.\ref{prop:Choi} is relegated to an appendix
because, while the proposition is important for context and motivation,
the proof has little direct r\^{o}le in anything else here.
One of the lemmas of the appendix is appealed to \S\ref{sec:CV}.

\subsubsection*{The perspective vis-\`{a}-vis the literature}

There are two distinct uses of bases in statements/version of the CJ Theorem. 
One is as a method of specifying a maximal entangled state vector in
$\cH\otimes\cH$.
This is a bad method, however, because the labelling of such vectors by ONBs of $\cH$ is both incomplete and redundant. In the framework we set up, there will be neither the slightest temptation toward, nor advantage in, such use of bases.
The other use is to facilitate computationally convenient matrix
representation. These become conflated in one popular way of writing
the Choi matrix of $\Lambda$, viz.
\begin{equation}\nonumber
  \sum_{i,j} \Dket{i}\Dbra{j} \otimes \Lambda(\Dket{i}\Dbra{j}),
\end{equation}
a form which goes back to Choi's original work\cite{Choi-75}.
Appealing though it may be on the surface, this eventuates in considerable
head-scratching about transformation properties, a problem which would not
arise if the uses of bases were kept separate.

Although the conventional CJ isomorphisms $\vJ_\Upsilon$
depends on a choice of $\Upsilon$, it seems that said choice is
very often a matter of total indifference, which fact encourages the
common habit of referring to 
``\textit{the} Choi-Jamio{l}kowski isomorphism'' (singular).
If so, one would be wise to ask whether it might be better to use
a version which required no such choice. As more complex 
applications emerge, as for instance to process tensors\cite{Pollock+18},
the greater the advantage in doing so.

As far as I can tell, the primary literature reflects little interest
in the kinds of questions offered above as motivation,
and the secondary literature even less.
Grabowski et al.\cite{Grabowski+Kus+Marmo-07} are interested in
doing away with the entangled state seed, under the slogan
``no use of coordinate (matrix) interpretation''.
Their work directly inspired the $\gJ$-transform used here.
Jiang et al.\cite{Jiang+Luo+Fu-13} discuss transformation properties
under change of $\Upsilon$. Broadly enough construed,
transformation properties are at the heart of this treatment.
This leads us to the use of a bit of category theory, but only as
an analysis tool. I see little connection with the
far more radical project of \textit{categorical quantum mechanics}
\cite{Abramsky+Coecke-09,Coecke+Kissinger-16a,Coecke+Kissinger-16b,Barnum+Duncan+Wilce-13}.

\section{Hilbert spaces}\label{sec:Hilbert}

The approach of this paper is based on the notion of
(finite-dimensional) Hilbert space, rather than on Dirac notation.
That may sound a strange contrast to make, so we will start with dual
spaces as Hilbert spaces, something Dirac notation makes extremely
awkward, at best. The general message is that the dual space,
tensor product, and linear-maps (again --- in finite dimensions) 
constructors yield Hilbert space outputs from Hilbert space
inputs, all in a coherent way.

Even the reader for whom the ideas here are very familiar should
probably skim this section for notation.
To help keep track of what space various vectors are in,
vectors in $\cH$ will be denoted by
$\hh$, $\hh'$, etc., and those in $\cK$ by $\kk$ etc.
Upper-case latin letters denote operators, upper-case greek
superoperators (when the distinction matters). 

\subsection{Dual spaces and transposition}

The dual space of $\cH$ is the vector space
\begin{equation}\nonumber
\cH^* = \Blin(\cH,\Cmplx)  
\end{equation}
of linear functionals on $\cH$.
We fix an inner product on $\cH^*$ by insisting that
the antilinear map $\hh \mapsto \pam{\hh}$,
where $\pam{\hh}(\hh') = \inpr{\hh}{\hh'}$, be antiunitary.
($\inpr{\hh}{\hh'}$ is an inner product evaluated at the pair
$({\hh},{\hh'})$, not a bra applied to a ket.)

This gives
\begin{equation}\label{eq:H* ip}
\inpr{\pam{\hh}}{\pam{\hh'}} \defeq \inpr{\hh'}{\hh}.   
\end{equation}


The \textbf{transpose} of $A\colon\Blin(\cH,\cK)$
is the operator $\tpose{A}$
in \mbox{$\Blin(\cK^*,\cH^*)$} defined by
\begin{equation}\label{eq:t def}
\inpr{\tpose{A}\pam{\kk}}{\pam{\hh}}
\defeq \inpr{A\hh}{\kk}.
\end{equation}
I.e., the matrix of $\tpose{A}$ with respect to the orthonormal bases
(ONBs) $\pam{\kk}_1,\ldots,\pam{\kk}_n$ and
$\pam{\hh}_1,\ldots,\pam{\hh}_n$ is the transpose of the matrix
of $A$ with respect to
${\hh}_1,\ldots,{\hh_n}$ and ${\kk}_1,\ldots,{\kk_n}$,
for \textit{any} choice of the latter bases.

The RHS of (\ref{eq:t def}) is equal to
$\inpr{\pam{\kk}}{\pam{A\hh}}$ by (\ref{eq:H* ip}), and to
$\inpr{\hh}{A^\dag\kk}$, which is
$\inpr{\pam\kk}{\tpose{(A^\dag)}\pam\hh}$.
Comparing these to our original expression
$\inpr{\tpose{A}\pam\kk}{\pam\hh}$ shows that
$\tpose{(A^\dag)}\pam\hh = (\tpose{A})^\dag\pam\hh = \pam{A\hh}$.
This motivates (already!) another definition
\begin{empheq}[box=\widefbox]{equation}
  \label{eq:star}
A^\ast \defeq \tpose{A}^\dag,  
\end{empheq}
such that $A^\ast \pam{\hh} = \pam{A\hh}$.
This operation seems not to have a standard name;
we call it \textbf{conjugation}.

A nice way to present the preceding equation is as the commutative diagram
\begin{equation}\label{eq:dual natural}
  \begin{tikzcd}
    \cH
    \arrow[d,rightarrow,"A"']
    \arrow[r,rightarrow,"\pam{(\;)}"]
    &
    \cH^*
    \arrow[d,rightarrow,"{A}^\ast"]
    \\
    \cK
    \arrow[r,rightarrow,"\pam{(\;)}"]
    & \cK^*
  \end{tikzcd}
\end{equation}
``Commutative'' simply means that, for a fixed start and finish node,
all compositions of arrows along a path from start to finish give the same
result. In this case, that strictly equational content is no more than what we
have already written down, immediately after (\ref{eq:star}).
Still, the left-vs-right and top-vs-bottom
arrangement of the diagram can give a clearer picture of what is going on.

\subsection{Rank-1 operators}

In $\Blin(\cH,\cK)$, the vector space of linear operators from
$\cH$ to $\cK$,
there is a special class of operators called \textit{dyadics} or
\textbf{rank-1 operators}. They are of the form $\kk\pam{\hh}$.  
We are going to manipulate these a little for practice with the notation.
The result of applying $\kk\pam\hh$ to $\hh'\in\cH$ is
$\kk \pam{\hh}(\hh') = \kk \inpr{\hh}{\hh'}$.
Next, $\inpr{\kk'}{\kk\pam{\hh}\cdot\hh'}
= \inpr{\kk'}{\kk} \inpr{\hh}{\hh'}
= \inpr{ \hh\pam{\kk}\cdot\kk' }{\hh'}$, 
so $(\kk\pam{\hh})^\dag = \hh\pam{\kk}$.
Finally,
$(\kk\pam\hh)^\ast \pam{\hh'}
= \pam{\kk\pam{\hh}\cdot \hh'}
=\pam{\kk}\pam{\inpr{\hh}{\hh'}}
=\pam{\kk}\inpr{\pam\hh}{\pam{\hh'}}$, so
$(\kk\pam{\hh})^\ast = \pam{\kk}\,\pam{\pam{\hh}}$.
Since $\ast = t \circ \dag$, we can immediately fill out 
the following diagram to conveniently keep track of these operations:
\begin{equation}\label{eq:bar-dag-star}
  \begin{tikzcd}
\kk\pam{\hh}\in\Blin(\cH,\cK)
    \arrow[r,rightarrow,"\dag"]
    \arrow[d,rightarrow,"\pam{{(\;)}}"']
    \arrow[dr,rightarrow,"\ast"]
& [-0.5em]
    \Blin(\cK,\cH) \ni \hh\pam{\kk}
    \arrow[d,rightarrow,"t"]
\\
\pam{\kk\pam{\hh}}\in \Blin(\cH,\cK)^*
    \arrow[r,dashrightarrow]
& 
    \Blin(\cH^*,\cK^*) \ni \pam{\kk}\, \pam{\pam\hh}
  \end{tikzcd}
\end{equation}
All the operations here are either unitary or antiunitary.
The $\pam{\pam\hh}$ at the lower-right node
may look a bit strange, but it is nothing new.
Our definition of the overbar says that
$\pam{\pam{\hh}}$ is the element of $\cH^{**}$ such that
$\pam{\pam{\hh}}(\pam{\hh'})
= \inpr{\pam{\hh}}{\pam{\hh'}}$.

%
\subsection{$\Blin(\cH,\cK)$ as a Hilbert space}\label{eq:L(H,K) as H-space}

As vector spaces, $\Blin(\cH,\cK)$ and $\cK\otimes\cH^*$
are \textit{naturally isomorphic} (``naturally'' will be clarified in
\S\ref{sec:T}), via
\begin{equation}\label{eq:T}
    \boxed{
  \begin{split}
  \gT \colon \Blin(\cH,\cK) & \xrightarrow{\cong} \cK\otimes\cH^*  \\
      \kk\pam{\hh} &\mapsto \kk\otimes\pam{\hh}.
  \end{split}}
 \end{equation}
It suffices to define $\gT$ only on rank-1 operators, 
because the thing we are defining is a linear operator,
and the rank-1 operators linearly span $\Blin(\cH,\cK)$.
In the sequel, we will use this technique of explicitly defining linear
operators only on a spanning set, and not even bother to mention the
linear extension.

Using the natural Hilbert space structure on $\cK\otimes\cH$,
namely
\begin{equation}\label{eq:tensor ip}
\inpr{\kk\otimes{\hh}}{\kk'\otimes{{\hh'}}} 
=
\inpr{\kk}{\kk'}
\inpr{{\hh}}{{\hh'}},
\end{equation}
we make $\Blin(\cH,\cK)$ a Hilbert space
by insisting $\gT$ be unitary, or a \textit{Hilbert isomorphism},
similarly to what was just done for $\cH^*$:
\begin{equation}\label{eq:HS derivation}
  \begin{split}
\inpr{\kk\pam{\hh}}{\kk'\pam{\hh'}}
    &\stackrel{(\ref{eq:T})}{=}
\inpr{\kk\otimes{\pam\hh}}{\kk'\otimes{\pam{\hh'}}} 
\\ &\stackrel{(\ref{eq:tensor ip})}{=}
      \inpr{\kk}{\kk'}
      \inpr{\pam{\hh}}{\pam{\hh'}}
\\  &\stackrel{(\ref{eq:H* ip})}{=}     
          \inpr{\kk}{\kk'}\inpr{\hh'}{\hh}
\\ &=     
          \Tr ({\hh\pam{\kk}} \circ {\kk'\pam{\hh'}})
\\  &\stackrel{(\ref{eq:bar-dag-star})}{=}     
      \Tr ({(\kk\pam{\hh})^\dag} \circ {\kk'\pam{\hh'}}). 
  \end{split}
\end{equation}
Therefore,
\begin{equation}\label{eq:HS}
  \inpr{A}{B} = \Tr A^\dag B.
\end{equation}
\subsection{An operator/vector identity}

The identity
\begin{equation}\label{eq:switch}
  \inpr{\kk\pam{\hh}}{A}  
  =
  \inpr{\kk}{A\hh} 
\end{equation}
will be repeatedly useful in the sequel. The inner products here
are, evidently, on $\Blin(\cH,\cK)$ (LHS) and $\cK$ (RHS).
This identity follows immediately from definition (\ref{eq:HS}),
but could also be taken as a definition of the Hilbert-Schmidt
inner product, with the LHS as \textit{definiendum}.

\subsection{Everything fits together}

\begin{itemize}
\item 
$\Cmplx$ is itself a Hilbert space over $\Cmplx$ with
inner product \mbox{$\inpr{a}{b} = \pam{a}b$}.
Here, the overbar is just complex conjugation, but
this is compatible with our other use of overbar.

\item
Considered under the guise $\Blin(\cH,\Cmplx)$, the dual space
$\cH^*$ has a Hilbert-Schmidt inner product.
We verify that it coincides with our earlier definition
(\ref{eq:H* ip}).
To do that, write $\pam{\hh}$ as $1\pam{\hh}$.
Now,
$\inpr{1\pam{\hh}}{1\pam{\hh'}}
  = \Tr (\hh\pam{1}\circ 1\pam{\hh'})
 = \Tr \hh\pam{\hh'}
 = \inpr{\hh'}{\hh}
 = \inpr{\pam{\hh}}{\pam{\hh'}}$,
as required.

\item
As linear spaces, $\cH$ and $\Blin(\Cmplx,\cH)$ are isomorphic
via $\hh \mapsto \hh\pam{1}$. 
Check that this is unitary:
$\inpr{\hh\pam{1}}{\hh'\pam{1}}
=\inpr{\hh}{\hh'}\inpr{\pam{1}}{\pam{1}}
=\inpr{\hh}{\hh'}$.

\item
Since $\hh\mapsto\pam{\hh}$ is antiunitary,
\mbox{$\hh \mapsto \pam{\pam{h}} \colon \cH \cong \cH^{**}$},
is a unitary map between $\cH$ and its bidual.
\end{itemize}

\section{The idea of naturality\cite{Eilenberg+MacLane-45,MacLane-71,%
  Pierce-91,Awodey-06,Riehl-16}}%
\label{sec:cats}

This paper contains many diagrams like the one in (\ref{eq:dual natural}),
as well as talk of such related categorial notions as
\textit{functors} and (especially) \textit{natural transformations},
so now may be a good time to start getting used to it.
We try to say here no more than is necessary for our purposes.

\subsection{Category}

The primary background \textbf{category} here is that of (finite-dimensional)
complex Hilbert spaces with all linear operators amonst them, which we call
$\mathsf{Hil}$. What does this mean?
A category is a structure $\mathsf{C}$ consisting of
\begin{itemize}
\item 
 a collection of \textbf{objects} $X,Y,\ldots$
\item for each ordered pair $(X,Y)$ of objects,
a set of \textbf{morphisms} $\mathsf{C}(X,Y)$
(said to have \textbf{domain} $X$ and \textbf{codomain} $Y$),
such that
\begin{enumerate}
\item 
$f\in\mathsf{C}(X,Y)$ and $g\in\mathsf{C}(Y,Z)$ are
composable to a morphism in $\mathsf{C}(X,Z)$ denoted $g\circ f$ or $gf$,
\item 
composition is associative,
\item 
for each object $X$, there is an identity
\mbox{$\mathrm{id}_X\in\mathsf{C}(X,X)$},
i.e., $\mathrm{id}_X\circ f = f$ or
$f \circ \mathrm{id}_X = f$ if $f$ has domain or codomain $X$, respectively.
\end{enumerate}
\end{itemize}

For typical categories, the objects are some kind of
structured set (e.g., Hilbert spaces) and morphisms are mappings which
respect that structure (e.g., linear mappings, or subclasses thereof).
In such a case, categorial composition is simply ordinary map composition.

A different example will help us get an intuitive feeling for
natural transformations.
A group can be construed as a category $\mathsf{G}$ having a single object,
say $*$; the group elements are the morphisms $\mathsf{G}(*,*)$.
Part of the group axioms are embodied by the
category axioms. We need add only the stipulation that every morphism is
an isomorphism, i.e., has an inverse.
From this perspective, we might say that categories are what
result when, starting from groups, you drop the requirement of
invertibility and allow the group multiplication to be partial.
(Almost --- this does not account for identity morphisms.)

\subsection{Functor}\label{sec:functor}

A \textbf{functor} $F$ from category $\mathsf{C}$ to $\mathsf{D}$ is a pair
of mappings (on objects and morphisms) which carries $\mathsf{C}(X,Y)$
into $\mathsf{D}(FX,FY)$ in such a way that the categorial structure
(composition and identity) is respected,
that is, $F(gf) = (Fg)(Ff)$ and $F\mathrm{id}_X = \mathrm{id}_{FX}$.

The map \mbox{$(A\colon\cH\rightarrow\cK)
  \,\mapsto\,
  (A^\ast\colon\cH^*\rightarrow\cK^*)$} of (\ref{eq:star})
is an example.
The demonstration is simple:
$(BA)^\ast\pam{\hh} = \pam{BA\hh}
= B^\ast(\pam{A\hh}) = B^\ast A^\ast \pam{\hh}$, and
$\Id_\cH^\ast = \Id_{\cH^*}$ is trivial.
The domain of $\ast$ is evidently
$\mathsf{Hil}$, but we might consider the codomain to be
$\mathsf{Hil}$, or the subcategory $\mathsf{Hil}^*$ of $\mathsf{Hil}$
containing only dual spaces,
or even the category of sets and set mappings, $\mathsf{Set}$.

Return to the example of a group regarded as a category $\mathsf{G}$.
A functor from $\mathsf{G}$ into the category $\mathsf{Vec}$ of
vector spaces and
linear maps consists of a vector space $\cV$ and a map of the group
elements to linear operators on $\cV$ which respects composition and
identity. In more conventional terms, such a functor is a
\textit{group representation}.
The idea of functor is thereby seen as a generalization of
group representation.

\subsection{Natural transformation}

Finally, we come to the notion of \textbf{natural transformation},
which is what we really need out of this digression.
Sticking with our group example, suppose for concreteness that
$\mathsf{G}$ is the Euclidean group.
Take the functor $F$ to represent the $\mathsf{G}$-action on
a space $\cQ$ of electric charge distributions,
and a second functor $F'$ to be the $\mathsf{G}$-action on
a space $\cE$ of electric field configurations.
Then, the diagram
\begin{equation}\nonumber
  \begin{tikzcd}
    \cQ
    \arrow[r,rightarrow,"\alpha"]
    \arrow[d,rightarrow,"{F}g"]
    &
    \cE
    \arrow[d,rightarrow,"F'g"]
    \\
    \cQ
    \arrow[r,rightarrow,"\alpha"]
    &
    \cE
  \end{tikzcd}
\end{equation}
commutes if $\alpha$ correctly implements Coulomb's law:
the electric field of a rotated/translated charge distribution
is the same as the rotated/translated electric field of the original
charge distribution.
Any $\alpha$ which is a morphism in $\mathsf{Vec}$ and makes the
diagram commute is a \textbf{natural transformation} in this situation.

To generalize, we consider two functors
$F,G\colon\mathsf{C}\rightarrow\mathsf{D}$ with the same domain and
codomain categories. A natural transformation consists of a
$\mathsf{D}$-morphisms $\alpha_{_X}$ for each $\mathsf{C}$-object $X$,
such that the diagram
\begin{equation}\nonumber
  \begin{tikzcd}
    FX
    \arrow[r,rightarrow,"\alpha_{_X}"]
    \arrow[d,rightarrow,"{F}f"]
    &
    GX
    \arrow[d,rightarrow,"Gf"]
    \\
    FY
    \arrow[r,rightarrow,"\alpha_{_Y}"]
    &
    GY
  \end{tikzcd}
\end{equation}
commutes for any $\mathsf{C}$-morphism $f\colon X\rightarrow Y$.
One says that $\alpha_{_X}$ is the $X$-component of $\alpha$; the
diagram above is a \textbf{naturality square}; and the notation
$\alpha\colon F \Rightarrow G$ is common to indicate a natural
transformation.
In the main body of the paper, we will refrain from writing indices
on our natural transformations, to avoid excessive clutter.

Look back at diagram (\ref{eq:dual natural}).
Is $\pam{(\;)}$ a natural transformation between the identity functor
and $\ast$? Well, it (really, its components) needs to be in the target
category.
One way to make the claim proper is to declare the codomain
category to be $\mathsf{Hil}$ with antilinear maps added to the morphisms,
say $\mathsf{Hil}^+$. 
Then $\pam{(\;)}$ is a natural transformation, indeed a natural isomorphism,
between the embedding of $\mathsf{Hil}$ into $\mathsf{Hil}^+$
and $\ast$, also viewed as a functor into the bigger category.
Alternatively, we could consider the codomain to be $\mathsf{Set}$,
the category of sets and all set functions. 

\section{$\gT$ is a natural unitary}\label{sec:T}

This section shows that $\gT$ (\ref{eq:T}) is,
up to a phase, the unique unitary from
$\Blin(\cH,\cK)$ to $\cK\otimes\cH^*$, natural in $\cH$ and $\cK$.
Both the results and the techniques will be of use later.

\subsection{Bifunctors}

A generic pair of arrows
$A\colon \cH\rightarrow\cM$ and
$B\colon \cK\rightarrow\cN$ in $\mathsf{Hil}$ can be considered
as an arrow
\begin{equation}
\begin{tikzcd}
    (\cH,\cK)
    \arrow[d,rightarrow,"{(A,B)}"]
\\
    (\cM,\cN)
  \end{tikzcd}
\end{equation}
of a new category, $\mathsf{Hil}\times\mathsf{Hil}$. 
This is a purely formal construction, and has no meaning beyond
the syntax. We can try to interpret it by Hilbert space operators
via a functor into $\mathsf{Hil}$.
Here are the images of the above arrow under three potentially
useful functors:
    \begin{equation}\label{eq:3 functors}
  \begin{tikzcd}
    \cK\otimes\cH
  \arrow[d,rightarrow,"B\otimes {A}"']
    &[-1.5em]
  \Blin(\cH,\cK)
   \arrow[d,rightarrow,"B\blank A^\dag"']
    &[-1.5em]
 \cK\otimes\cH^*
  \arrow[d,rightarrow,"B\otimes {A^\ast}"']
    \\
  \cN\otimes\cM
&  \Blin(\cM,\cN)
&  \cN\otimes\cM^*
  \end{tikzcd}
\end{equation}
The first one, $B\otimes A$, probably suggests itself most readily, but
it is the other two which are of most interest to us.
One also calls these \textbf{bifunctors}, as they can be considered
functors of two variables, and one can consider transformations which
are natural in one or both variables.
We use the term \textbf{binatural} to emphasize naturality in both, in
contexts where that might not be clear.

\subsection{naturality square}

The diagram at right here,
\begin{equation}\label{eq:T natural}
  \begin{tikzcd}
    (\cH,\cK)\arrow[d,rightarrow,"{(A,B)}"']
    \\
    (\cM,\cN)
  \end{tikzcd}
  \quad
\begin{tikzcd}
  \Blin(\cH,\cK)
  \arrow[d,rightarrow,"B\blank A^\dag"']
  \arrow[r,rightarrow,"\gT","\cong"']
  & [-0.5em]
  \cK\otimes\cH^*
  \arrow[d,rightarrow,"B\otimes {A^\ast}"]
  \\
  \Blin(\cM,\cN)
  \arrow[r,rightarrow,"\gT"',"\cong"]
  &
  \cN\otimes\cM^*
\end{tikzcd}
\end{equation}
\noindent
is the naturality square for $\gT$.
Here is the commutativity check:
\begin{equation}\nonumber
\begin{tikzcd}
  \kk\pam\hh 
  \arrow[d,mapsto,"B\blank A^\dag"']
  \arrow[r,mapsto,"\gT"]
  & [-0.5em]
  \kk\otimes\pam\hh
  \arrow[d,mapsto,"B\otimes {A^\ast}"]
  \\
  (B\kk)(\pam\hh A^\dag)
  \arrow[r,mapsto,"\gT"']
  &
  B\kk \otimes A^\ast\pam\hh
\end{tikzcd}
\end{equation}
Going across the bottom uses $A^\ast\pam{\hh}=\pam{A\hh} = \pam{\hh}A^\dag$.
This shows that $\gT$ is a natural transformation, the components of which
are unitary, thus we call it a \textbf{natural unitary}.
The notational convention of \S\ref{sec:cats} would require an
index on $\gT$ to indicate which component is involved; we will dispense
with that. Instead, we think of it as a \textbf{polymorphic} object
(to borrow a computer programming term). Which particular component is
required is determined by the argument. Notice that, although each compoent
is unitary, $\gT$ as a whole does not have a well-defined inverse.
$\gT$ maps both $\Blin(\cH\otimes\cM,\cK)$ and $\Blin(\cM,\cK\otimes\cH^*)$ 
onto to $\cK\otimes\cH^*\otimes\cM^*$.

To avoid clutter, we have written ``$\cong$'' in the diagram
(\ref{eq:T natural}) instead of more accurately labelling the
arrows as unitary. This will be a habit.

\subsubsection*{No natural transformation to the other functor}

We said that the first functor in (\ref{eq:3 functors}) was not
of interest. That is because there is no natural transformation to
it from either of the others.
To see why, consider the following maybe-diagram.
\begin{equation}\label{eq:unnatural}
\begin{tikzcd}
  \Blin(\cH,\cK)
  \arrow[d,rightarrow,"B\blank A^\dag"']
  \arrow[r,dashrightarrow,"?"]
  & [-0.5em]
  \cK\otimes\cH
  \arrow[d,rightarrow,"B\otimes {A}"]
  \\
  \Blin(\cM,\cN)
  \arrow[r,dashrightarrow,"??"]
  &
  \cN\otimes\cM
\end{tikzcd}
\end{equation}
No linear or antilinear arrows ``?'' and ``??'' can make this commute
for all $B\colon\Blin(\cK,\cN)$ and $A\colon\Blin(\cH,\cM)$ because
multiplying $A$ and $B$ by $i$ leaves the left arrow unchanged, but 
reverses the sign of the right arrow.

\subsection{Yoneda formula}

We can compute $\gT(BA^\dag)$ for any
\begin{equation}\nonumber
  \begin{tikzcd}
    \cM\arrow[r,->,"A^\dag"]
    & \cH\arrow[r,rightarrow,"B"]
    & \cN
  \end{tikzcd}
\end{equation}
using the naturality square with $\cK=\cH$ like so,
\begin{align}
  \gT(BA^\dag)
   & = \gT\circ(B\blank A^\dag)(\Id_\cH) \nonumber
    \\ & \stackrel{(\ref{eq:T natural})}{=}
         (B\otimes A^\ast) \gT(\Id_\cH) \nonumber
\\ & =
         (B\otimes A^\ast) I_\cH.
  \label{eq:T Yoneda}
\end{align}
The last line incorporates the definition
\begin{equation}\label{eq:T(Id)}
  \begin{split}
  I_\cH & \defeq \gT \Id_\cH\colon\cH\otimes\cH^*
\\ & = \onbsum{i} \gT(\hh_i\pam{\hh_i})
\\ & = \onbsum{i} \hh_i\otimes\pam{\hh_i}.
  \end{split}
\end{equation}

Eq. (\ref{eq:T Yoneda}) is essentially an instance of the Yoneda Lemma;
after all, $(\cH,\cK) \mapsto \Blin(\cH,\cK)$ is a Hom bifunctor.
We will not actually use that result, but, for convenience and as a
nod to the connection, 
formulas such as (\ref{eq:T Yoneda})
which express the action of a natural transformation on arbitrary
operators in terms of its action on an identity will be called
\textbf{Yoneda formulas}. The main contrast here is to formulas,
such as that in the definition (\ref{eq:T}),
which explains the action of $\gT$ in terms of some kind of decomposition
of the argument (there as rank-1 operators). These will be called
\textbf{structural formulas}.
Returning to (\ref{eq:T Yoneda}), since
$B\otimes A^\ast$ equals $(B\otimes\Id )(\Id \otimes A^\ast)$, those
factors commute, and either one can be ``pushed back inside'' by reading
(\ref{eq:T Yoneda}) backward, we have the variants (also to be called
\textit{Yondeda formulas})
\begin{equation}\nonumber
  \begin{split}
  \gT(BA^\dag)
& =  (B\otimes \Id_{\cM^*}) \gT(A^\dag)
\\ & =  (\Id_{\cN} \otimes A^\ast) \gT(B).
  \end{split}
\end{equation}

\subsection{Uniqueness}\label{sec:T unique}

The reader may easily check that the first two lines of (\ref{eq:T Yoneda})
use only that $\gT$ is a natural transformation, i.e.,
the naturality square commutes.
We can go further with this line of argument to show that
any natural transformation
\mbox{$\alpha\colon B \blank A^\dag \Rightarrow B \otimes A^\ast$}
is determined by $\alpha(\Id_\Cmplx)$, and that if $\alpha$ is
natural unitary, then the freedom here amounts to only a phase,
so that $\alpha = c\gT$ for some modulus-1 complex number $c$.
Here is the relevant instance of the naturality square:
\begin{equation}\label{eq:T unique}
\begin{tikzcd}
  \Blin(\Cmplx,\Cmplx)
  \arrow[d,rightarrow,"(\hh\pam{1})\blank 1\pam{\hh}"']
  \arrow[r,rightarrow,"\alpha"]
  & [-0.5em]
  \Cmplx\otimes\Cmplx^*
  \arrow[d,rightarrow,"\hh\pam{1}\otimes \pam{\hh}\,\pam{\pam{1}}"]
  \\
  \Blin(\cH,\cH)
  \arrow[r,rightarrow,"\alpha"]
  &
  \cH\otimes\cH^*
\end{tikzcd}
\end{equation}
The point is that $\Cmplx\otimes\Cmplx^*$ is
one-dimensional, so $\alpha(\Id_\Cmplx) = c(1\otimes\pam{1})$,
and therefore $\alpha(\hh\pam{\hh}) = c\hh\otimes\pam{\hh}$.
Since $\alpha$ must be linear, summing over a basis gives
$\alpha(\Id_\cH) = c I_\cH$, with $I_\cH$ from (\ref{eq:T(Id)}).
The Yoneda formula determines the action of $\alpha$ on everything else.

\subsection{Operator/state-vector duality}
\label{sec:operator/state-vector duality}

Given a (not necessarily orthogonal) basis $(\hh_n)$ of $\cH$,
state vectors in $\cK\otimes\cH$ are in one-to-one correspondence with
sums
\begin{equation}
\sum_n \kk_n\otimes\hh_n,
\end{equation}
i.e., choices of $(\kk_n)$. The map $\hh_n \mapsto \kk_n$
extends by linearity to an operator in $\Blin(\cH,\cK)$.
Abstracting over state vectors, this is a linear operator
$\cK\otimes\cH \rightarrow \Blin(\cH,\cK)$.
However, \textit{this correspondence depends on the basis $(\hh_n)$}.
This is essentially the point of the impossibility of
commutative diagram (\ref{eq:unnatural}), in other terms.

Our natural unitary $\gT$ shows how to do better.
Indeed, instantiated at $\Blin(\cH^*,\cK)$, $\gT$
is a unitary mapping onto $\cK\otimes\cH^{**}$,
which is tantamout to a unitary onto $\cK\otimes\cH$,
since $\cH^{**}$ is naturally identified with $\cH$.

More concretely, given
$A$ in $\Blin(\cH^*,\cK)$,
then
\begin{equation}
  \begin{split}
\gT(A)
&=  (A \otimes \Id)I_{\cH^*}
    \\ & =
  \onbsum{n} A \pam{\hh}_n \otimes \pam{\pam{\hh}}_n\colon \cH\otimes\cH^{**}.
  \end{split}
\end{equation}
Although this formula involves a particular ONB $(\hh_n)$, the state
vector $\gT(A)$ is independent of that choice.
Moreover, the norm of $\gT(A)$ is the same as that of $A$.

Thus, we have an operator/state-vector duality.
Lifting $\gT$ via functor $\btheta$ 
(introduced in \S\ref{sec:pi and theta})
will be a key part of the channel/state duality derived later.

\section{$\gJ$-transform}\label{sec:J}

$\Blin(\cH,\cK)$ is a Hilbert space whenever $\cH$ and $\cK$ are so,
hence if $\cM$ and $\cN$ are Hilbert spaces, so is
$\Blin(\Blin(\cH,\cK),\Blin(\cM,\cN))$.
This section constructs a unitary mapping $\gJ$ of 
$\Blin(\Blin(\cH,\cK),\Blin(\cM,\cN))$
onto
$\Blin(\Blin(\cH,\cM),\Blin(\cK,\cN))$, in a way 
uniform for all quadruples of Hilbert spaces.
Although the basic development is just as easy with four potentially
distinct Hilbert spaces,
it is the specialization
$\Blin(\Blin(\cH),\Blin(\cK)) \bijam \Blin(\Blin(\cH,\cK))$
which is of ultimate interest, because it is a Choi map,
as demonstrated in \S\ref{sec:J is a Choi}.
Not surprisingly, it is also natural, as the next section shows.

\subsection{Definition}

$\gT$ induces a unitary
\begin{equation}\nonumber
  \Blin(\Blin(\cH,\cK),\Blin(\cM,\cN))
  \cong
  \cN\otimes\cM^*\otimes\cK^*\otimes\cH^{**}
\end{equation}
via application once at the outer level and twice at
the inner level.
Since permuting the order of factors
in a tensor product (in particular, swapping $\cM^*$ and $\cK^*$)
is a unitary, we immediately deduce that
\begin{align}
\gJ \colon  
  \Blin(\Blin(\cH,\cK),\Blin(\cM,\cN))
& \xrightarrow{\cong}  \Blin(\Blin(\cH,\cM),\Blin(\cK,\cN))
\nonumber    \\
\nn\pam{\mm}\circ\pam{\kk\pam{\hh}}
& \mapsto \nn\pam{\kk}\circ\pam{\mm\pam{\hh}}
  \label{eq:jam}
\end{align}
is also unitary. That is, for each choice of $\cH,\cK,\cM,\cN$ we
get a specific unitary map.

\subsection{Key properties of $\gJ$}\label{sec:key props}

At a slightly higher level of description, 
rank-1 elements of $\Blin(\Blin(\cH,\cK))$ have the form
$A\pam{B}$, with \mbox{$A,B\in\Blin(\cH,\cK)$}.
The linear span of these is all of $\Blin(\Blin(\cH,\cK))$,
and we can use them to give a neat structural formula for
$\gJ$ using the notation
\begin{empheq}[box=\widefbox]{equation}\label{eq:blank}
A\blank B \defeq \; X \mapsto AXB
\end{empheq}
for $A\in\Blin(\cK)$, $B\in\Blin(\cH)$.
($\blank$ here is merely a syntactic device indicating where the
argument goes.)
With this, we have the following key properties:
\begin{subequations}\label{eq:key props}
  \begin{empheq}[box=\widefbox]{align}
  \text{polymorphic } \gJ \text{ is its own inverse}
  \label{eq:own inverse} \\
  \text{instantiations of } \gJ \text{ are unitary}
  \label{eq:J unitary} \\
  A\pam{B}  \bijam  A \blank B^\dag
  \label{eq:J rank-1} \\
 \gJ(\Lambda\otimes\Gamma) = \gJ \Lambda \otimes \gJ \Gamma
    \label{eq:jam distributes over otimes}
  \end{empheq}
  \end{subequations}
In fact, either direction of (\ref{eq:J rank-1}) is adequate to define $\gJ$.
However, it is not obvious from that form,
as it is from our official low-level definition,
either that $\gJ$ is self-inverse, or that it is a Hilbert isomorphism.

\begin{proof}
[Proof of \paren{\ref{eq:own inverse},\ref{eq:J unitary}}] 
  Immediate from the definition.
\end{proof}

\begin{proof}[Proof of \paren{\ref{eq:J rank-1}}]
  By linearity, it suffices to consider the case of
  rank-1 $A$ and $B$. To wit,
\begin{equation}\nonumber 
\begin{split}
\Big(\gJ ( \nn\pam{\kk}\circ\pam{\mm\pam{\hh}} )\Big) X
\stackrel{(\ref{eq:jam})}{=} \, &
            \Big( \nn\pam{\mm}\, \circ \pam{\kk\pam{\hh}} \Big) X
  \\ = \, &
            \nn\pam{\mm} \inpr{\kk\pam{\hh}}{X}
  \\ \stackrel{(\ref{eq:switch})}{=} \, &
            \nn \inpr{\kk}{X\hh} \pam{\mm}
  \\ \stackrel{(\ref{eq:bar-dag-star})}{=} \, &
            (\nn \pam{\kk}) X (\mm\pam{\hh})^\dag
\end{split}
\end{equation}
\end{proof}
\begin{proof}[Proof of \paren{\ref{eq:jam distributes over otimes}}]
\begin{equation}\nonumber
\begin{split}
  A\pam{B} \otimes S\pam{T}
 = & (A\otimes S) (\pam{B\otimes T}) 
  \\ \overset{(\ref{eq:J rank-1})}{\underset{\gJ}{\leftrightarrow}}
  & 
    (A\otimes S) \blank (B^\dag \otimes T^\dag)
  \\ = &
       A\blank B^\dag \otimes S \blank T^\dag
  \\  \stackrel{(\ref{eq:J rank-1})}{=} &
    \gJ(A\pam{B}) \otimes \gJ(S \pam{T}).
\end{split}
\end{equation}
\end{proof}

\subsection{$\gJ$ is a Choi map}\label{sec:J is a Choi}

Now we specialize to the case $\cM=\cH$ and $\cK=\cN$, i.e.,
\begin{equation}\label{eq:J B(B(H),B(K))}
\Blin(\Blin(\cH),\Blin(\cK))
\bijam  
\Blin(\Blin(\cH,\cK)),
\end{equation}
which is the one of primary interest for quantum physics.
This section proves that this is a Choi map in four steps.
Henceforth, ``$\gJ$-transform'' refers to this two-variable
specialization by default.
\subsubsection*{Monotonicity, complete positivity, and positivity}
First, we need to establish some terminology.
\mbox{$\Lambda\in\Blin(\Blin(\cH,\cK))$} is \textbf{rank-$N$-positive}
if \mbox{$\inpr{A}{\Lambda A}\ge 0$} for all $A$ of rank (denoted $\rk$)
not exceeding $N$.
\mbox{$\Lambda\in\Blin(\Blin(\cH),\Blin(\cK))$}
is \textbf{monotone}
if $\Lambda A \ge 0$ whenever $A \ge 0$; it is
\textbf{$N$-monotone}
if \mbox{$\Lambda\otimes\Id_{\Blin(\Cmplx^N)}$} is monotone.
So, $1$-monotone is synonymous with monotone;
$\CP$ with ``$\forall N$, $N$-monotone'';
and ordinary operator positivity with ``$\forall N$, rank-$N$-positive''.

The set of $\CP$ maps in $\Blin(\Blin(\cH),\Blin(\cK))$ will
be denoted $\CP(\cH,\cK)$. So, we aim to show that $\gJ$ maps
$\CP(\cH,\cK)$ bijectively onto the cone $\Blin(\Blin(\cH,\cK))^+$ of positive
operators on $\Blin(\cH,\cK)$.
In fact, we will prove the stronger assertion that $\Lambda$ is
$N$-monotone iff $\gJ \Lambda$ is rank-$N$ positive.

\subsubsection{A formula for $\gJ$}

The new notation
\begin{empheq}[box=\widefbox]{equation}\label{eq:pi}
    \bpi(\hh) \defeq \hh\pam{\hh},
\end{empheq}
will be very useful.
In Dirac notation, this would be written $\Dket{\hh}\Dbra{\hh}$.
Note that
\begin{equation}
\bpi(\hh)^* = \bpi(\pam{\hh}) = \tpose{(\bpi(\hh))},
\end{equation}
as follows from (\ref{eq:bar-dag-star}).

By use of $\bpi$, the instance (\ref{eq:J B(B(H),B(K))}) of $\gJ$
is fully expressed by
\begin{equation}\label{eq:pi pi pi}
\bpi(\kk) \pam{\bpi(\hh)} \bijam \bpi(\kk\pam{\hh}).
\end{equation}
This is immediately verified by expanding the LHS as
$\kk\pam{\kk}\circ \pam{\hh\pam{\hh}}$
and the RHS as
$(\kk\pam{\hh})\circ(\pam{\kk\pam{\hh}})$.

\subsubsection{Monotone  $\bijam$  rank-1 positive}

\begin{equation}\nonumber
  \begin{split}
  \inpr{\bpi(\kk)}{\Lambda\cdot\bpi(\hh)}
& = 
  \inpr{\bpi(\kk)\pam{\bpi(\hh)}}{\Lambda}
\\ & \stackrel{(\ref{eq:pi pi pi})}{=}
  \inpr{\bpi(\kk\pam{\hh})}{\gJ\Lambda}
\\ & {=}
  \inpr{\kk\pam{\hh}}{\gJ\Lambda\cdot\kk\pam\hh}
  \end{split}
\end{equation}
Because
$\inpr{\kk}{(\Lambda\cdot\bpi(\hh))\kk}
=\inpr{\bpi(\kk)}{\Lambda\cdot\bpi(\hh)}$,
the LHS is positive for all $\kk$ and $\hh$ iff
$\Lambda$ is monotone, similarly the RHS is so iff 
$\gJ\Lambda$ is rank-1 positive.

\subsubsection{ $\Lambda\otimes{\Id}_{\Blin(\cN)}  %
\bijam \gJ\Lambda\otimes{\pibox}(\Id_{\cN})$ }

This follows from distributivity of $\gJ$ over $\otimes$,
and the case
$\Id_{\Blin(\cN)} = \Id_\cN\blank\Id_\cN \bijam \Id_\cN\pam{\Id_\cN}
= \bpi(\Id_\cN)$ of (\ref{eq:J rank-1}).

\subsubsection%
[$\Gamma\otimes\pibox({\Id}_{\Cmplx^N})$  is rank-1 positive 
\\ $\Leftrightarrow \Gamma$ is rank-$N$ positive ]
{$\Gamma\otimes\pibox({\Id}_{\Cmplx^N})$  is rank-1 positive 
  $\;\Leftrightarrow\;$  $\Gamma$ is rank-$N$ positive }

First, note
\begin{equation}\nonumber
  \pam{\Id}_{\cH}\cdot\hh'\pam{\hh}
  =  \inpr{\Id_{\cH}\hh}{\hh'}
  =  \inpr{\hh}{\hh'}.
\end{equation}

Now, we use a purification technique.
If
\begin{equation}
B = \sum_{i=1}^n{\kk_i}\pam{\hh}_i,
\end{equation}
let $\ee_1,\ldots,\ee_n$ an ONB of $\Cmplx^n$,
\begin{equation}\nonumber
  \begin{split}
  {\kk}_\circ & \defeq \sum_{i=1}^n \kk_i\otimes\ee_i,
    \\
  {\hh}_\circ & \defeq \sum_{i=1}^n \hh_i\otimes\ee_i,
\\ \widetilde{B} & \defeq \kk_\circ\pam{\hh}_\circ.
  \end{split}
\end{equation}
With these definitions,
\begin{equation}\label{eq:purify 1}
\begin{split}
  (\Gamma \otimes 1\pam{\Id}_{\Cmplx^n})\widetilde{B}
  &=
 \sum_{ij} (\Gamma \otimes 1\pam{\Id})
    \kk_i\pam{\hh}_j\otimes \ee_i\pam{\ee}_j
\\ & = \Gamma B \otimes 1 \colon\Blin(\cK)\otimes\Cmplx.
\end{split}
\end{equation}
We put in an explicit tensor factor $\Cmplx$
because the next step is clearer with it.
Next, using (\ref{eq:purify 1}) twice,
\begin{equation}\label{eq:jam choi step 3}
  \begin{split}
  \inpr{B}{\Gamma B}
  & =
  \inpr{(\Id\otimes 1\pam{\Id})\cdot \widetilde{B}}%
  {(\Gamma \otimes 1\pam{\Id}_{\Cmplx^n})\cdot \widetilde{B}}
\\  & =
  \inpr{\widetilde{B}}%
  {(\Gamma \otimes \bpi(\Id_{\Cmplx^n})) \widetilde{B}}.
  \end{split}
\end{equation}
The second equality results from flipping $\Id\otimes 1\pam{\Id}$
to the other side, using $(1\pam{\Id})^\dag(1\pam{\Id}) = \bpi(\Id)$.
Since $B$ here is free, apart from having rank not exceeding $N$,
(\ref{eq:jam choi step 3}) is equivalent to:
$\Gamma\otimes\bpi({\Id}_{\Cmplx^N})$  is rank-1 positive 
  $\;\Leftrightarrow\;$  $\Gamma$ is rank-$N$ positive.

  \subsubsection%
  [ $\Lambda$ is $N$-monotone  \\
  $\Leftrightarrow \gJ\Lambda$ is rank-$N$ positive]
  { $\Lambda$ is $N$-monotone  $\Leftrightarrow$
  $\gJ\Lambda$ is rank-$N$ positive}

Finally, we obtain the promised conclusion
by chaining together the preceding three steps:
$\Lambda$ is $N$-monotone  $\equiv$ 
$\Lambda\otimes\Id_{\Blin(\Cmplx^N)}$ is monotone
$\Leftrightarrow$
\mbox{$\gJ\Lambda\otimes\pam{\Id}_{(\Cmplx^N)}$} is rank-$1$ positive
$\Leftrightarrow$
$\gJ\Lambda$ is rank-$N$ positive.

\section{Life of \pibox\ and \thetabox}\label{sec:pi and theta}

We can achieve a very tidy formulation of the $\gJ$ transform
using the definition
  \begin{equation}\label{eq:theta}
    \boxed{
      \begin{split}
\Blin(\cH,\cK) & \xrightarrow{\btheta} \Blin(\Blin(\cH),\Blin(\cK))
  \\
A & \mapsto A\blank A^\dag 
  \end{split}}
\end{equation}
Indeed, with the new notation, (\ref{eq:J rank-1}) immediately
specializes to
\begin{equation}\label{eq:J theta pi}
\btheta(A) \bijam \bpi(A).  
\end{equation}
This is just the tip of the iceberg, however. This section explores the
way $\bpi$ and $\btheta$ work together, and gives the promised demonstration
that $\gJ$ is a natural unitary, in fact the \textit{unique} natural Choi map
with carrier $\cK\otimes\cH^*$.

\subsection{$\gJ$ in terms of \pibox\ and \thetabox}\label{sec:J via theta-pi}

The import of the preceding section is effectively expressed by
the commutative diagram
\begin{equation}\label{eq:central diagram}
  \begin{tikzcd}
    & [-3em]
    \Blin(\cH,\cK)
      \arrow[dr,rightarrow,"\bpi"]
      \arrow[dl,rightarrow,"\btheta"']
    & [-3em]
    \\
    \CP(\cH,\cK)
    \arrow[rr,leftrightarrow,"\gJ"]
    \arrow[d,hookrightarrow,"\subset"]
    && 
    \Blin(\Blin(\cH,\cK))^+
    \arrow[d,hookrightarrow,"\subset"]
    \\ 
    \hspace{-2em}
    \Blin(\Blin(\cH),\Blin(\cK))
 \arrow[rr,leftrightarrow,"\gJ"]
 &&
 \Blin(\Blin(\cH,\cK))
\end{tikzcd}
\end{equation}

$\bpi[\Blin(\cH,\cK)]$ is not
all of $\Blin(\Blin(\cH,\cK))^+$, but it does generate the latter as
a convex cone. That is, any positive operator $\Gamma$ on
$\Blin(\cH,\cK)$ can be written as $\sum\bpi(A_n)$ for suitable $A_n$.
For instance, an eigenvector decomposition will do (after all, 
$\Blin(\cH,\cK)$ is just another Hilbert space).
Consequently, $\gJ\Gamma = \sum\btheta(A_n)$.
Since $\gJ$ is bijective, it follows that every $\CP$ map 
has such \textbf{Kraus decompositions}.

The diagram (\ref{eq:central diagram}) should be regarded as being in
$\mathsf{Set}$, for a couple of reasons.
Most importantly, $\bpi$ and $\btheta$ are neither linear nor
antilinear. These are the only maps of that sort that we shall deal with.
Also, $\CP(\cH,\cK)$ and $\Blin(\Blin(\cH,\cK))^+$ are not vector spaces.
But, $\CP(\cH,\cK) \bijam \Blin(\Blin(\cH,\cK))^+$ has a unique linear
extension shown as the bottom row. This is worth pointing out here because
many of the maps in later diagrams have that property; they usually
bear a reading as representions of their linear extensions.
E.g., we can regard 
\mbox{$\gJ\colon \CP(\cH,\cK) \rightarrow \Blin(\Blin(\cH,\cK))^+$}
as an arrow in $\mathsf{Hil}$ even though, on the face of it, it 
is not so.

\subsection{\thetabox\ is a monoidal functor from $\mathsf{Hil}$ to $\CP$}
\label{sec:monoidal functor}

$\btheta$ is a functor:
Clearly $\btheta(\Id_{\cH}) = \Id_{\Blin(\cH)}$, and
$\btheta(AB) = AB\blank B^\dag A^\dag$ carries $X$ into
$ABXB^\dag A^\dag = \btheta(A)\btheta(B)X$.
Furthermore, \mbox{$\btheta(A)\cdot B \ge 0$} whenever
\mbox{$B\ge 0$},
i.e., $\btheta(A)$ is monotone.
So, we can write it as
\begin{equation}\label{eq:theta functor}
  \begin{tikzcd}
    \cH\arrow[d,rightarrow,"A"] \\ \cK
  \end{tikzcd}
\quad  \mapsto
  \begin{tikzcd}
\Blin(\cH)^+ \arrow[d,rightarrow,"\btheta(A)"] \\ \Blin(\cK)^+
  \end{tikzcd}
\end{equation}
Moreover, $\btheta$ is
\textbf{monoidal} in the sense that
\begin{equation}
\btheta(A\otimes B) = \btheta(A)\otimes\btheta(B).
\end{equation}
Therefore, $\btheta(A)$ is not just monotone, but $\CP$
[Proof:
$\btheta(A\otimes\Id_\cN)$ is monotone, but
$\btheta(\Id_\cN) = \Id_{\Blin(\cN)}$, so
$\btheta(A\otimes\Id_\cN) \\ = \btheta(A)\otimes\Id_{\Blin(\cN)}$.]

The reader may be feeling some unease because the codomain category
of the functor $\btheta$ was not specified. Taking (\ref{eq:theta functor})
at face value, it would seem to be the category $\CP$, the morphisms of which
are $\CP$ maps between cones of positive operators. Are the objects of
$\CP$ cones of positive operators such as $\Blin(\cH)$?
Well, they could be. But we could also take ``that object'' to be the underlying
Hilbert spaces $\cH$. This choice would be more consistent
with the notation $\CP(\cH,\cK)$. For that matter, we could even suppose
it to be $\Blin(\cH)$. It does not matter than much. What is required of
objects is that they unambiguously label domains and codomains of morphisms.
With the last option mentioned ($\Blin(\cH)$), we can suppose that the
codomain category is $\mathsf{Hil}$ itself; $\btheta(A)$ happens to be
$\CP$, so we write it as in (\ref{eq:theta functor}) to emphasize this.
A more sophisticated way to cast that is to recognize that we can
\textbf{embed} $\CP$ as a \textbf{subcategory} of $\mathsf{Hil}$
whenever convenient. This last point of view is a good one for the
purposes of this paper.

\subsection{And \pibox\ is a natural transformation}\label{sec:pi natural}

This is expressed by the diagram (in $\mathsf{Set}$)
\begin{equation}\label{eq:pi natural}
  \begin{tikzcd}
    \cH
    \arrow[d,rightarrow,"A"']
    \arrow[r,rightarrow,"\bpi"]
    &
    \Blin(\cH)^+
    \arrow[d,rightarrow,"\btheta(A)"]
    \\
    \cK
    \arrow[r,rightarrow,"\bpi"']
    & \Blin(\cK)^+
  \end{tikzcd}
\end{equation}
with the commutativity check
\begin{equation}\nonumber
  \begin{split}
  \btheta(A)(\bpi(\hh))
  & = A(\hh\pam{\hh})A^\dag  
\\ & = (A\hh)(\pam{\hh}A^\dag)  
\\ & = (A\hh)\pam{A\hh}  
\\ & = \bpi(A\hh).\quad \checkmark
  \end{split}
\end{equation}

\subsection{\pibox\ is a special case of \thetabox}\label{sec:preparation channel}

The (unique) instantiation of (\ref{eq:pi natural}) at
$\Cmplx$ at upper-left,
\begin{equation}\nonumber
  \begin{tikzcd}
    \Cmplx
    \arrow[d,rightarrow,"\hh\pam{1}"']
    \arrow[r,rightarrow,"\bpi"]
    & [0.5em]
    \Blin(\Cmplx)^+
    \arrow[d,rightarrow,"\btheta(\hh\pam{1})"]
    \arrow[symbol=\cong]{r}
    & [-2em]
    \Real
    \\
    \cH
    \arrow[r,rightarrow,"\bpi"']
    &
    \Blin(\cH)^+
    &
  \end{tikzcd}
\end{equation}
shows that
$\bpi(\hh) = \btheta(\hh\pam{1})\cdot\Id_\Cmplx$.
In the literature, a sum 
$\sum_n\btheta(\hh_n\pam{1})$ is called a \textit{preparation channel}.

\subsection{$\gJ$ is a natural Choi map}

\label{sec:J natural}

%

Now we come to the main point of this section, which is that
$\gJ$ is a natural unitary,
according to the naturality square
  \begin{equation}\label{eq:J natural}
\begin{tikzcd}
  \CP(\cH,\cK)
  \arrow[d,rightarrow,"\btheta(B)\blank \btheta(A^\dag)"']
  \arrow[r,leftrightarrow,"\gJ"]
  & [-1em]
  \Blin(\Blin(\cH,\cK))^+
  \arrow[d,rightarrow,"\btheta(B\blank{A^\dag})"']
      \\ [0.5em]
  \CP(\cM,\cN)
  \arrow[r,leftrightarrow,"\gJ"]
  &
  \Blin(\Blin(\cM,\cN))^+
\end{tikzcd}
\end{equation}
We need to check both commutativity, and
that the left and right edges correspond to (bi)functors.

Here is the commutativity check. Start with
$\btheta(X)$ in $\CP(\cH,\cK)$ (upper-left corner) and take it
across and down to obtain
$\btheta(B\blank A^\dag)\bpi(X)
= \bpi((B\blank A^\dag)X)
= \bpi(BX A^\dag)$.
For the down-and-across route,
note that $(\btheta(B)\blank \btheta(A^\dag))\btheta(X)
= \btheta(B)\btheta(X)\btheta(A^\dag)
= \btheta(BXA^\dag)$ by functoriality of $\btheta$.
$\gJ$ takes this to $\bpi(BXA^\dag)$.

Second, check correspondence of the left and right sides of the diagram to
functors.
It is not difficult to directly verify that the vertical arrows correspond to
functors.
However, a little effort can be saved by noting that they are compositions
of functors we have already met, namely,
\begin{equation}\nonumber
  (A,B)
  \,\mapsto\,
  B\blank A^\dag
  \,\mapsto\,
 \btheta(B\blank A^\dag),
\end{equation}
in the one case, and
\begin{equation}\nonumber
  (A,B)
  \,\mapsto\,
  (\btheta(A),\btheta(B))
  \,\mapsto\,
  \btheta(B)\blank \btheta(A)^\dag
\end{equation}
in the other.

\subsection{Uniqueness}

The argument here parallels the one we gave in \S\ref{sec:T unique}.
It all comes down to $\Id_{\Blin(\Cmplx)}$ in $\CP(\Cmplx,\Cmplx)$.
However, this time we are asking for a natural transformation which
is not only unitary, but which sends $\Id_{\Blin(\Cmplx)}$
($=\btheta(\Id_\Cmplx)$)
 to a positive operator on $\Blin(\Cmplx)$. Since
$\Blin(\Cmplx)$ is one-dimensional, there is only one choice.

\section{The natural Choi map carried on  $\cK\otimes\cH^*$}\label{sec:C}

$\gJ$ bears little resemblance to $\vJ_\Upsilon$ from the introduction.
The next step is to find something much closer, namely, the
unique natural Choi map with carrier $\cK\otimes\cH^*$.

\subsection{Definition}

The preceding two sections showed that $\gJ$ is a Choi map, and in
precisely what sense it is natural.
According to Prop.~\ref{prop:Choi}, and updating the statement with
our new notation $\btheta$,
any $(\cH,\cK)$-Choi map takes the form
\begin{equation}\nonumber
\btheta(U)\circ\gJ  
\end{equation}
for some unitary $U\colon\Blin(\cH,\cK)\rightarrow\cM$.
Furthermore, if the Choi map is to be natural in $\cH$ and $\cK$,
so should be $U$.
We have one such available, namely
\mbox{$\gT\colon\Blin(\cH,\cK)\rightarrow \cK\otimes\cH^*$}.
So, define
\begin{empheq}[box=\widefbox]{equation}\label{eq:C def}
\gC \defeq \btheta(\gT)\circ \gJ.
\end{empheq}

\subsection{Naturality and Yoneda formulas}

The naturality square for $\gT$ with $\btheta$ applied can be
pasted to the naturality square for $\gJ$, with the result
  \begin{equation}\label{eq:C natural}
\begin{tikzcd}
  (\cH,\cK)
  \arrow[r,rightarrow,"{(A,B)}"]
  & [2em]
  (\cM,\cN)
  \\ 
  \CP(\cH,\cK)
  \arrow[r,rightarrow,"\btheta(B)\blank \btheta(A)^\dag"']
  \arrow[d,leftrightarrow,"\gJ"']
  \arrow[dd,rightarrow,bend right=45,shift right=2em,"\gC"',"\cong"]
  & 
  \CP(\cM,\cN)
  \arrow[d,leftrightarrow,"\gJ"']
  \arrow[dd,rightarrow,bend left=45,shift left=2em,"\gC","\cong"']
  \\ [1em] 
  \Blin(\Blin(\cH,\cK))^+
  \arrow[r,rightarrow,"\btheta(B\blank{A^\dag})"']
  \arrow[d,rightarrow,"\btheta(\gT)"',"\cong"]
  &
  \Blin(\Blin(\cM,\cN))^+
  \arrow[d,rightarrow,"\btheta(\gT)"',"\cong"]
  \\ [1em] 
\Blin(\cK\otimes\cH^*)^+
  \arrow[r,rightarrow,"\btheta(B)\otimes \btheta(A^\ast)"']
  &
  \Blin(\cN\otimes\cM^*)^+
\end{tikzcd}
\end{equation}
For space reasons, we have rotated things here
by $90^\circ$ from our customary orientation.

\subsubsection*{Yoneda formula}

Taking $\cK=\cH$ in the naturality square (\ref{eq:C natural}) with
\begin{equation}\nonumber
  \begin{tikzcd}
    \cM\arrow[r,->,"A^\dag"]
    & \cH\arrow[r,rightarrow,"B"]
    & \cN,
  \end{tikzcd}
\end{equation}
and tracking $\Id_\cH$ along the right-then-down and down-then-right
routes of the outer square yields
\begin{equation}\label{eq:ttt}
  \begin{split}
      \gC(\btheta(B)\circ\btheta(A)^\dag)
&=\gC((\btheta(B)\blank\btheta(A^\dag))\Id_\cH)
\\ &=
(\btheta(B)\otimes \btheta({A}^\ast))\,\gC(\Id_\cH).
  \end{split}
\end{equation}
Now, each side is linear in $\btheta(B)$, so we are able to immediately
generalize to a sum $\Gamma$ of such terms, but to do the same with $A$
requires massaging $\btheta(A^\ast)$ a little, as follows:
$\btheta(A^\ast)\cdot X^\ast = (\btheta(A)\cdot X)^\ast$, or 
\begin{equation}
\btheta(A^\ast) = \ast\circ \btheta(A) \circ \ast^{-1},  
\end{equation}
with a little notational license.
Note also that
$\ast\circ \pam{A} \circ \ast^{-1} = \pam{A^*}$,
because $\inpr{A^\ast}{X^\ast}
= \pam{\inpr{A}{X}} = {\inpr{A}{X}}^\ast$.
\begin{equation}\label{eq:C Yoneda}
\gC(\Lambda\Psi^\dag)
 = (\Lambda \otimes (\ast \circ\Psi\circ \ast^{-1}))\cdot\gC(\Id).
\end{equation}
corresponds to the naturality square

\subsubsection*{}

  \begin{equation}\label{eq:C natural 2}
\begin{tikzcd}
  \CP(\cH,\cK)
  \arrow[d,rightarrow,"\Gamma \blank \Psi^\dag"']
  \arrow[r,rightarrow,"\gC","\cong"']
  & [2em]
\Blin(\cK\otimes\cH^*)^+
  \arrow[d,rightarrow,"\Gamma \otimes (\ast \circ \Psi \circ \ast^{-1})"']
  \\ [1em] 
  \CP(\cM,\cN)
  \arrow[r,rightarrow,"\gC"',"\cong"]
  &
  \Blin(\cN\otimes\cM^*)^+
\end{tikzcd}
\end{equation}
\subsection{Special forms}
%
\subsubsection{A structural formula: $\gC(\thetabox(A)) = \pibox(\gT A)$}\label{sec:C . theta}

The demonstration is
\begin{equation}\label{eq:C . theta}
\begin{split}
  \gC\circ\btheta
  &= \btheta(\gT)\circ\gJ\circ\btheta
\\  &\stackrel{(\ref{eq:J theta pi})}{=}
        \btheta(\gT)\circ\bpi
\\  &\stackrel{(\ref{eq:pi natural})}{=}
 \bpi\circ\gT.
\end{split}
\end{equation}
This relation is of the species earlier called a structural formula.
Given a Kraus decomposition $\Lambda = \sum \btheta(A_n)$, we obtain
$\gC\Lambda$ by applying the formula to each term.

\subsubsection{Action on identity: $\gC(\Id_{\Blin(\cH)}) = \pibox(I_\cH)$}\label{sec:C(Id)}

Since $\Id_{\Blin(\cH)}=\btheta(\Id_\cH)$ and $\gT(\Id_\cH) = I_\cH$,
this is immediate from (\ref{eq:C . theta}).

%
\subsubsection{A structural formula: $\gC(Z\pam{Y}) = Z\otimes Y^\ast$}
\label{sec:C(ZY-bar)}

With $Z\colon\Blin(\cK)$ and $Y\colon\Blin(\cH)$,
$\gC(Z\pam{Y})\in \Blin(\cK)\otimes\Blin(\cH)^*$.
Just examining the types strongly suggests that the formula
must be correct.
To render $Z\pam{Y}$ as a manifest composition, insert a trivial space as
\begin{equation}\nonumber
  \begin{tikzcd}
    \Blin(\cH) \arrow[r,->,"\Id_{\Cmplx}\pam{Y}"]
    & [0.5em]
    \Blin(\Cmplx) \arrow[r,->,"Z\pam{\Id}_{\Cmplx}"]
    & [0.5em]
    \Blin(\cK)
  \end{tikzcd}.
\end{equation}
This is now set to verify the formula as
\begin{align}
  \gC(Z\pam{Y})
    &= \gC((Z\pam{\Id}_\Cmplx)\circ(\Id_\Cmplx\pam{Y}))
  \nonumber
  \\  &\stackrel{(\ref{eq:C Yoneda})}{=}
        (Z\pam{\Id}_\Cmplx\otimes Y^\ast \pam{\Id_\Cmplx^\ast})
        \bpi(I_{\Cmplx})
        \nonumber
  \\  & =
        (Z\otimes Y^\ast)
        \pam{({\Id}_\Cmplx\otimes \Id_\Cmplx^\ast)}
        \bpi(I_{\Cmplx}).
  \label{eq:ZY-bar 1}
\end{align}
Now, observe that
\begin{equation}
  \begin{split}
  \bpi(I_{\Cmplx}) 
&= \bpi(1\otimes\pam{1}) 
\\ &= \bpi(1)\otimes\bpi(\pam{1}) 
\\ &= \Id_\Cmplx\otimes {\Id}_{\Cmplx^\ast}
\\ &= \Id_\Cmplx\otimes {\Id}_\Cmplx^\ast,
  \end{split}
\end{equation}
so that (\ref{eq:ZY-bar 1}) collapses to just
$Z\otimes Y^\ast$.
\subsubsection{``Inversion'': 
  $\Lambda(A)  = (\Id\otimes\pam{A^\ast}) (\gC \Lambda)$}
\label{sec:Lambda(Z)}

%
%
The formula \ref{sec:C(ZY-bar)} is very useful for proving this.
Working with the map $A\pam{\Id_\Cmplx}$ instead of $A$ will cast the
result as an instance of a Yoneda formula, even if functionally,
it appears to be an inversion.
On the one hand,
\begin{equation}\nonumber
  \begin{split}
    \Lambda\circ (A\pam{\Id}_\Cmplx)
    &= \Lambda(A)\pam{\Id}_\Cmplx
    \\  & \stackrel{\gC}{\mapsto}
          \Lambda(A)\otimes\Id_\Cmplx^\ast,
  \end{split}
\end{equation}
by the preceding special case.
On the other hand,
\begin{align}
  \Lambda\circ A\pam{\Id}_\Cmplx
 \stackrel{(\ref{eq:C Yoneda})}{=}
\Big(\Id\otimes\Id_\Cmplx^\ast\pam{A^\ast}\Big) \gC(\Lambda).
  \nonumber
  \\ = 
\Big(\Id\otimes \pam{A^\ast}\Big) \gC(\Lambda) 
\otimes\Id_\Cmplx^\ast
\end{align}

\subsection{Uniqueness}\label{sec:C unique}

$\gC$ is the unique natural $(\cH,\cK)$-Choi map with carrier
$\cK\otimes\cH^*$. The argument is only a slight variation on that
for $\gJ$. Under any natural Choi map, $\Id_{\Blin(\Cmplx)}$ must be
carried to a positive operator of norm one on the one-dimensional
space $\Cmplx\otimes\Cmplx^*$. The only choice is
$\bpi(1\otimes \pam{1})$.

\section{Conventional CJ isomorphisms}\label{sec:CV}

At this point, we have found two Choi maps
\begin{equation}\nonumber
\gJ\colon\CP(\cH,\cK)\rightarrow\Blin(\Blin(\cH,\cK))^+
\end{equation}
and
\begin{equation}\nonumber
\gC\colon\CP(\cH,\cK)\rightarrow\Blin(\cK\otimes\cH^*)^+,
\end{equation}
natural in both $\cH$ and $\cK$.
The second of these bears considerable resemblance to the
conventional Choi-Jamio{l}kowski isomorphisms in the CJ Theorem,
except that the carrier is 
$\cK\otimes\cH^*$ rather than $\cK\otimes\cH$.
Such is the price of naturality.
However, it is not the aim of this paper to replace the
ordinary Choi-Jamio{l}kowski isomorphisms, but rather to locate them in
relation to a fully natural framework. The practical-minded may think
there is nothing to be done: Merely pretend $\cH$ is $\cH^*$ via some
unitary and carry on. Well, essentially yes, but
it pays to be somewhat more deliberate about it.

\subsection{$\cK$-natural Choi maps carried on $\cK\otimes\cH$ }
\label{sec:K-natural Choi}

Recall the natural unitary $\cH\rightarrow \cH^{**}$.
It plays an important role in this section, because
$\gC$ provides a \mbox{$(\cH^*,\cK)$-Choi} map carried on
$\cK \otimes \cH^{**}$. With the aforementioned unitary, we are free
to regard this as tantamount to carrier $\cK \otimes \cH$.
But, we do not want to clutter things by writing an explicit conversion,
so we will mostly work explicitly with $\cH^{**}$, understanding that the
conversion is available on demand and can be applied without violence to
any natural construction.

With that clarification, suppose we had a $(\cH,\cK)$-Choi map
carried on $\cK\otimes\cH$ which was natural in $\cK$, but not
necessarily in $\cH$.
Then, by composition, we would have a unitary $\alpha_\cK$ from
$\CP(\cH,\cK)$ to $\CP(\cH^*,\cK)$ natural in $\cK$, with
a naturality square
\begin{equation}
    \begin{tikzcd}
  \CP(\cH,\cH)
  \arrow[r,rightarrow,"\btheta(B)\blank\btheta(\Id_\cH)"]
  \arrow[d,rightarrow,"\alpha_\cH"']
  & [2em]
  \CP(\cH,\cK)
  \arrow[d,rightarrow,"\alpha_\cK"]
  \\ 
  \CP(\cH^*,\cH)
  \arrow[r,rightarrow,"\btheta(B)\blank\btheta(\Id_\cH)"']
  & 
  \CP(\cH^*,\cK)
\end{tikzcd}
\end{equation}
(As with any of our natural transformations, we are dealing with a collection
of morphisms. However, in this case we do not yet have an explicit expression,
so the subscripts on $\alpha$ an needed to keep things straight.)
Here we can apply a Yoneda strategy, going both ways round the square
starting with $\Id_{\Blin(\cH)}$ ($=\btheta(\Id_\cH)$) in the upper-left
corner:
\begin{equation}
\alpha_\cK(\btheta(B)) = \btheta(B)\circ\alpha_\cH(\btheta(\Id_\cH)).
\end{equation}
$\alpha_\cH$ must not only be unitary as a map
\mbox{$\Blin(\Blin(\cH),\Blin(\cH)) \rightarrow\Blin(\Blin(\cH^*),\Blin(\cH))$},
but match the $\CP$ cones isometrically. Therefore,
since $\btheta(\Id_\cH)$ is extreme, $\alpha_\cH(\btheta(\Id_\cH)) = \btheta(X)$
for some $X\in\Blin(\cH^*,\cH)$. And, then,
$\alpha_\cK(\btheta(B)) = \btheta(BX)$.
For $\alpha_\cK$ to be unitary, it is necessary and sufficient that
$X$ be unitary.
Conclusion: For any unitary
$V \colon \cH \rightarrow \cH^*$,
the diagram
\begin{equation}\label{eq:K-natural}
    \begin{tikzcd}
  \CP(\cH,\cK)
  \arrow[r,rightarrow,"\btheta(B)\blank\btheta(\Id_\cH)"]
  \arrow[d,rightarrow,"\btheta(\Id_\cK)\blank\btheta(V^\dag)"']
  & [2em]
  \CP(\cH,\cN)
  \arrow[d,rightarrow,"\btheta(\Id_\cN)\blank\btheta(V^\dag)"]
  \\ [0.5em]
  \CP(\cH^*,\cK)
  \arrow[r,rightarrow,"\btheta(B)\blank\btheta(\Id_\cH)"']
  & 
  \CP(\cH^*,\cN)
\end{tikzcd}
\end{equation}
is a naturality square.

\subsection{de Pillis/Jamio{l}kowski map is not Choi}
\label{sec:de Pillis/Jamiolkowski}

The very term \textit{Choi-Jamio{l}kowski isomorphism} reflects what
seems to be a widespread assumption that Jamio{l}kowski\cite{Jamiolkowski-72}
studied a mapping which is more or less equivalent
to that suggested by Choi\cite{Choi-75}, and appearing in
(\ref{eq:usual Choi}).
In fact, it is not.
(See \cite{Jiang+Luo+Fu-13} for more about this.)
Jamio{l}kowski, following de Pillis\cite{de-Pillis-67},
studied a mapping which can be written
for $Y\in\Blin(\cK)$, $Z\in\Blin(\cH)$ as
\begin{equation}\label{eq:Jam's version}
Y\pam{Z} \stackrel{\sJ}{\mapsto} Y\otimes Z^\dag \colon \Blin(\cK\otimes\cH). 
\end{equation}
This is essentially the content of Eq. (1.4) in \cite{Jamiolkowski-72},
with a reversed order of tensor factors.
The interesting thing about this is that it is canonical in the sense
of being independent of any basis or entangled state, and it sends
monotone maps to positive operators.
However, (\ref{eq:Jam's version}) is not a Choi map. The point we wish
to make is that we are in a position to understand essentially why that
is so. Compare (\ref{eq:Jam's version}) to \S\ref{sec:C(ZY-bar)},
\begin{equation}\nonumber
Y\pam{Z} \stackrel{\gC}{\mapsto} Y\otimes Z^\ast.
\end{equation}
Thereby, we can see that $\sJ = (\Id\otimes t^{-1})\gC$.
Now, $t^{-1}$ factors through the isomorphism $\cH^{**}\cong\cH$
via $t$, so we are free to focus on
\mbox{$t\colon\Blin(\cH)\rightarrow\Blin(\cH^*)$} instead.
The transformation $t$ is natural. However, it is not $\btheta(U)$
for any unitary $U\colon\cH\rightarrow\cH^*$. Hence, according
to Prop.\ref{prop:Choi}, $\sJ$ cannot be a Choi map.

\subsection{No binatural Choi maps carried on $\cK\otimes\cH$}
\label{sec:no binatural into KxH}

Are there any Choi maps carried on $\cK\otimes\cH$ which are binatural
(i.e., natural in both $\cK$ and $\cH$)?
No, there are not. This is the reason the CJ isomorphisms need
entangled state seeds.

\S \ref{sec:K-natural Choi} found necessary and sufficient conditions for
naturality in $\cK$ alone. Naturality also in $\cH$ requires
commutativity of
\begin{equation}
    \begin{tikzcd}
  \CP(\cH,\cH)
  \arrow[r,rightarrow,"\btheta(B)\blank\btheta(A^\dag)"]
  \arrow[d,rightarrow,"\btheta(\Id_\cH)\blank\btheta(V^\dag)"']
  & [2em]
  \CP(\cH,\cK)
  \arrow[d,rightarrow,"\btheta(\Id_\cK)\blank\btheta(V^\dag)"']
  \\ 
  \CP(\cH^*,\cH)
  \arrow[r,rightarrow,"\btheta(B)\blank\btheta(\tpose{A})"']
  & 
  \CP(\cH^*,\cK)
\end{tikzcd}
\end{equation}
where $\Id_\cH$ has been replaced by arbitrary \mbox{$A\in\Blin(\cH)$}.
That implies
\begin{equation}\nonumber
\btheta(B\blank V^\dag(\tpose{A}))
=
\btheta(B\blank{A^\dag V^\dag}),
\end{equation}
or $A^\ast = VAV^\dag$, which cannot be satisfied for variable $A$.

\subsection{Choi map $\gC_V$ is natural in $\cK$}

So, among $(\cH,\cK)$-Choi maps carried on $\cK\otimes\cH$,
there are none binatural, but those natural in $\cK$ are in
one-to-one correspondence with unitaries $V\colon\cH\rightarrow\cH^*$.
Now it is time to examine them more closely.
Start with giving them names. For each such unitary, $V$, define
\begin{equation}\label{eq:C_V def}
\gC_V\Lambda \defeq \gC(\Lambda\circ\btheta(V^\dag)).
\end{equation}
corresponding to the diagram
\begin{equation}
    \begin{tikzcd}
  \CP(\cH,\cK)
  \arrow[r,rightarrow,"\btheta(B)\blank\Id"']
  \arrow[d,rightarrow,"\Id\blank\btheta(V^\dag)"',"\cong"]
  \arrow[dd,rightarrow,bend right=45,shift right=2.2em,"\gC_V"',"\cong"]
  & [1em]
  \CP(\cH,\cN)
  \arrow[d,rightarrow,"\Id\blank\btheta(V^\dag)"',"\cong"]
  \arrow[dd,rightarrow,bend left=45,shift left=2em,"\gC_V"',"\cong"]
  \\ [1em] 
  \CP(\cH^*,\cK)
  \arrow[r,rightarrow,"\btheta(B)\blank\Id"']
  \arrow[d,rightarrow,"\gC"',"\cong"]
  & 
  \CP(\cH^*,\cN)
  \arrow[d,rightarrow,"\gC"',"\cong"]
  \\ [1em] 
\Blin(\cK\otimes\cH)^+
  \arrow[r,rightarrow,"\btheta(B)\otimes \Id"']
  &
  \Blin(\cN\otimes\cH)^+
\end{tikzcd}
\end{equation}
This is a trivially commuting square for the category $\CP$ on
top, fused to a naturality square for $\gC$.
The unitary $\gC_V\colon\CP(\cH,\cK)\rightarrow \Blin(\cK\otimes\cH)$ is not
natural in $\cH$. Indeed, the idea makes no sense: $\cH$ is locked down by
$V$ so not free to vary. However, it is natural in $\cK$, as the diagram
shows.
Corresponding to this naturality, we have the Yoneda formula
\begin{empheq}[box=\widefbox]{equation}
    \gC_V \Lambda = (\Lambda\otimes\Id_{\cH^{*}}^\ast)\circ \gC_V(\Id_{\Blin(\cH)}).
 \end{empheq}
This recovers essentially (\ref{eq:usual Choi}) in the Introduction.

\subsubsection*{Action on identity}

We can rewrite $\gC_V(\Id_{\Blin(\cH)})$ in various useful ways:
\begin{align}
\gC_V(\Id_{\Blin(\cH)})
  & \stackrel{(\ref{eq:C_V def})}{=} \gC(\btheta({V^\dag}))
    \nonumber
\\  & \stackrel{(\ref{sec:C . theta})}{=}
 \bpi(\gT(V^\dag)).
      \label{eq:proj on entangled}
\end{align}
This is simply the projector onto the entangled state
determined by $V$, as in \S\ref{sec:operator/state-vector duality}.

\subsubsection*{Converting between $\gC$ and $\gC_V$}
Alternatively,
\begin{align}
\gC(\btheta({V^\dag}))
 & \stackrel{}{=} \gC(\Id_{\Blin(\cH)}\circ\, \btheta({V^\dag}))
    \nonumber
\\  & \stackrel{(\ref{eq:ttt})}{=}
      \Big(\Id\otimes\, \btheta(V^\ast)\Big) \gC(\Id_{\Blin(\cH)}).
      \label{eq:C_V(1)}
\end{align}
This shows how to convert back-and-forth directly between
$\gC\Lambda$ and $\gC_V\Lambda$.
Multiplying by $(\Lambda\otimes\Id_{\Blin(\cH)})$ and using the Yoneda formulas 
\begin{equation}\label{eq:C to C_V}
  \gC_V\Lambda = \Big(\Id_{\Blin(\cK)}\otimes\, \btheta({V^\ast})\Big)( \gC\Lambda).
\end{equation}
Equivalently,
\begin{equation}\label{eq:C_V to C}
 \gC\Lambda = \Big(\Id_{\Blin(\cK)}\otimes\, \btheta(\tpose{V})\Big)( \gC_V\Lambda).
\end{equation}

\subsection{Inversion}

\begin{equation}
  \begin{split}
    \Lambda(Z) &=
                 (\Id\otimes\pam{Z^\ast}) (\gC \Lambda)
\\ &=                 
          (\Id\otimes\, \btheta(\tpose{V})\pam{Z^\ast}) (\gC_V \Lambda)
\\ &=                 
          \Big( \Id\otimes\, \pam{\btheta(V^\dag)(Z^\ast)} \Big) (\gC_V \Lambda)
  \end{split}
\end{equation}

$\gC(\thetabox(A)) = \pibox(\gT A)$

$\gC(Z\pam{Y}) = Z\otimes Y^\ast$

\section{Trace properties}\label{sec:trace}

A Choi map carry the cone of $\CP$ maps onto the cone of positive
operators on a Hilbert space.
Taken at face value, a  ``channel/state duality'' should carry
\textit{channels}, i.e., a \textit{trace-preserving}
$\CP$ maps, onto \textit{states}, i.e., positive operators of trace 1.
How do our Choi maps do in this regard?

If $\gS$ is an $(\cH,\cK)$-Choi map, then $\Tr \circ \gS$ is a
linear functional on $\Lambda\colon\Blin(\Blin(\cH),\Blin(\cK))$.
Moreover, since $\Tr UAU^{-1} = \Tr A$, Prop.\ref{prop:Choi} implies that
for every Choi map $\gS$, this is the same linear functional.
So, let us work with $\gJ$ to see what we can find out about it.
Recall that if $A\colon\Blin(\cH)$, then 
\begin{equation}\label{eq:trace}
\Tr A = \inpr{\Id_\cH}{A}.  
\end{equation}
Applying this a couple of times, 
\begin{align}
  \Tr \gJ \Lambda
  & \stackrel{(\ref{eq:trace})}{=}
    \inpr{\Id_{\Blin(\cH,\cK)}}{\gJ\Lambda}  \nonumber
  \\  &=  \inpr{\Id_{\cK} \pam\Id_{\cH}}{\Lambda}   \nonumber
\\  & \stackrel{(\ref{eq:switch})}{=}
        \inpr{\Id_{\cK}}{\Lambda \Id_{\cH}}   \nonumber
  \\  & \stackrel{(\ref{eq:trace})}{=}
        \Tr \Lambda(\Id_{\cH})   \label{eq:Tr JLambda}.
\end{align}
Therefore, $\Tr  \gS \Lambda = \dim \cH$
in the particular case that $\Lambda$ is trace-preserving
($\CP$ or not).
$\gS \Lambda$ fails to be normalized, but only by a factor which
depends only on $\Lambda$'s domain.
This, we can live with.
Fixing it up by introducing normalization factors across the board is
not an option because thus-modified Choi maps would no longer be
natural in $\cH$.

The converse direction is hopeless. Every ray in the cone
$\Blin(\cM)^+$ intersects the affine hyperplane $\{\Tr = 1\}$, but not 
every nonzero $\CP$ map has a trace-preserving multiple.

\section{Conclusion}

The conventional CJ isomorphism(s) are too special and the general
idea of a Choi map too permissive.
With the aid of the category-theoretic concept of
\textit{natural transformation}, we can chart a fruitful middle course.
The conventional CJ isomorphisms can be put in relation to, and some of
its properties explained in terms of, other Choi maps which themselves
can be very useful. To recapitulate:
\begin{itemize}
\item $\gJ$ is the unique (bi)natural Choi map carried on
\mbox{$\Blin(\cH,\cK)$}.
\item $\gC$ is the unique binatural Choi map carried on
  \mbox{$\cK\otimes\cH^*$}.
\item There are no binatural Choi maps carried on \mbox{$\cK\otimes\cH$},
  but for each unitary $V\colon\cH\rightarrow\cH^*$, there is a
  modification $\gC_V$ of $\gC$, which is natural in $\cK$.
  These are the conventional CJ isomorphisms.
\end{itemize}
For some purposes, the unique Choi maps are a better choice than the
conventional ones. For instance,
$\gJ$ was used in \S \ref{sec:J via theta-pi} to quickly show the
existence of Kraus decompositions; this can even be the basis for
the proof of Kraus decomposition in infinite dimensions\cite{Lammert-26b}.

\acknowledgments
This work was supported by NSF MRSEC DMR-2011839.


\appendix
\section{Proof of Prop.~\ref{prop:Choi}}\label{app:Choi}

This appendix provides the proof of Prop.~\ref{prop:Choi}, which
proceeds through two Lemmas with the aid of the following concept which 
does not appear in the body of this paper.
\begin{definition*}[Self-dual cone]
  A \textbf{cone} $\mathcal{C}$ in a vector space $\cH$ is a set such that,
  whenever $\hh$ is in $\mathcal{C}$, then so every positive multiple
  $c\hh$ ($0 < c$). $\mathcal{C}$ is \textbf{self-dual} if
  \begin{equation}\nonumber
    \hh\in\mathcal{C} \;\Leftrightarrow\; \forall \hh'\in\mathcal{C},\,
    0 \le \inpr{\hh'}{\hh}. 
  \end{equation}
\end{definition*}
A prime example of a self-dual cone is the set
$\Blin(\cH)^+\subset\Blin(\cH)$ of positive operators on $\cH$.
To see that it is such, note that
(i) $0\le A\in\Blin(\cH)$ if and only if
$0 \le \inpr{\hh}{A\hh} = \inpr{\bpi(\hh)}{A}$,
and (ii) every $B\in\Blin(\cH)^+$ is a sum of operators of form (\ref{eq:pi})
(use an eigenvector decomposition).

\begin{lem}\label{lem:self-dual}
  Suppose $\mathcal{C}\subset\cH$
  and $\mathcal{C}'\subset\cK$ are self-dual cones,
  and $U\colon\cH\rightarrow\cK$
  is a unitary such that $U\mathcal{C}\subseteq \mathcal{C}'$.
  Then, $U$ maps $\extr\mathcal{C}$ bijectively onto $\extr\mathcal{C}'$
  and $\mathcal{C}$ onto $\mathcal{C}'$.
\end{lem}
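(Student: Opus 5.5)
The plan is to exploit the adjoint $U^\dag = U^{-1}$ and show that self-duality forces it to map $\mathcal{C}'$ back into $\mathcal{C}$. Once that is known, the inclusion $U\mathcal{C}\subseteq\mathcal{C}'$ upgrades to equality. Then $U$ restricts to a linear bijection of cones, and the statement about extreme elements follows formally.

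\emph{Step 1: $U^\dag\mathcal{C}'\subseteq\mathcal{C}$.} Take $\kk'\in\mathcal{C}'$. For arbitrary $\hh\in\mathcal{C}$ we have
\[
\inpr{\hh}{U^\dag\kk'} = \inpr{U\hh}{\kk'} \ge 0,
\]
because $U\hh\in\mathcal{C}'$ by hypothesis and $\mathcal{C}'$ is self-dual (the ``$\Rightarrow$'' direction applied to $\kk'$). Since this holds for every $\hh\in\mathcal{C}$, the ``$\Leftarrow$'' direction of self-duality of $\mathcal{C}$ gives $U^\dag\kk'\in\mathcal{C}$.

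\emph{Step 2: $U\mathcal{C}=\mathcal{C}'$.} Since $U$ is unitary, $UU^\dag=\Id_\cK$. Step 1 then gives $\mathcal{C}' = U U^\dag\mathcal{C}'\subseteq U\mathcal{C}$. Combined with the hypothesis this yields equality. Hence $U|_{\mathcal{C}}$ is a bijection $\mathcal{C}\to\mathcal{C}'$ whose inverse is $U^\dag|_{\mathcal{C}'}$. Both maps are restrictions of linear maps, so they preserve sums and positive multiples.

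\emph{Step 3: extreme elements.} Here $\extr$ should be read in the cone sense: $\hh\in\mathcal{C}$ is extreme if $\hh=\hh_1+\hh_2$ with $\hh_i\in\mathcal{C}$ forces each $\hh_i$ to be a nonnegative multiple of $\hh$. These are the extreme rays, e.g.\ the $\bpi(\hh)$ for $\Blin(\cH)^+$. Suppose $\hh$ is extreme in $\mathcal{C}$ and $U\hh=\kk_1+\kk_2$ with $\kk_i\in\mathcal{C}'$. Applying $U^\dag$ gives $\hh=U^\dag\kk_1+U^\dag\kk_2$ with $U^\dag\kk_i\in\mathcal{C}$ by Step 1. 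So $U^\dag\kk_i=c_i\hh$ with $c_i\ge0$, and therefore $\kk_i=c_iU\hh$. Thus $U$ maps $\extr\mathcal{C}$ into $\extr\mathcal{C}'$. The symmetric argument with the roles of $U$ and $U^\dag$ exchanged, valid by Step 2, shows $U^\dag$ maps $\extr\mathcal{C}'$ into $\extr\mathcal{C}$. Hence the restriction is a bijection.

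The only real content is Step 1, which is where self-duality is used on both sides. Everything else is bookkeeping. The one point needing care is fixing the meaning of ``extreme'' for a cone (extreme rays rather than extreme points, since $0$ would otherwise be the only extreme point). I will state that convention explicitly, matching the use of $\btheta(\Id_\cH)$ as ``extreme'' in \S\ref{sec:K-natural Choi}.
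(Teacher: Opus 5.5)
Your proof is correct and is essentially the paper's argument: the paper gets $\mathcal{C}'\subseteq U\mathcal{C}$ by noting that $U\mathcal{C}$ is itself self-dual and testing each element of $\mathcal{C}'$ against it via $\inpr{U\hh}{\kk'}\ge 0$ (the same computation as your Step 1, without naming $U^\dag$), and it handles extreme elements by carrying decompositions forward under $U$ and back under $U^{-1}$, which is your Step 3 in contrapositive form. Your explicit extreme-ray convention usefully pins down what the paper leaves implicit.
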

\begin{proof}
  Since $U$ is unitary, i.e., preserves all inner products,
  $U\mathcal{C}$ is also a self-dual cone in $\cK$. Therefore
  \begin{equation}
    \hh\in U\mathcal{C}
   \;\Leftrightarrow\;
    \forall \hh\in\mathcal{C},\; 0\le\inpr{U\hh}{\hh}
    \; \Leftarrow \;
    \hh\in\mathcal{C}',
  \end{equation}
  with ``$\Leftarrow$'' due to the fact that $U\mathcal{C}$ is
  contained in the self-dual cone $\mathcal{C}'$,
  by hypothesis. Hence, $\mathcal{C}'=U\mathcal{C}$.

  Now, for $\hh\in\mathcal{C}$, any decomposition $\hh=\sum\hh_n$,
  with the $\hh_n$ linearly independent vectors in $\mathcal{C}$
  is carried by $U$ into a corresponding decomposition of $U\hh$.
  So, \mbox{$\hh\not\in\extr\mathcal{C}
    \Rightarrow
  U\hh\not\in\extr\mathcal{C}'$}.
The argument reverses since $U$ is unitary.
\end{proof}
\begin{lem}\label{lem:preserves positive}
  Suppose unitary map $\Gamma\colon\Blin(\cH)\rightarrow\Blin(\cK)$
maps $\Blin(\cH)^+$ onto $\Blin(\cK)^+$.
Then $\Gamma = \btheta(U)$ for some unitary $U\colon\cH\rightarrow\cK$.
\end{lem}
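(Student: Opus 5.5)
Since $\Gamma$ is unitary and carries the self-dual cone $\Blin(\cH)^+$ onto $\Blin(\cK)^+$, Lemma~\ref{lem:self-dual} shows that $\Gamma$ maps extreme rays onto extreme rays. The extreme rays of $\Blin(\cH)^+$ are exactly the rays through rank-1 projections, so for each unit $\hh$ we have $\Gamma(\bpi(\hh)) = c\,\bpi(\kk)$ with $c>0$. Unitarity fixes the scale, since $\|\bpi(\hh)\|_{HS}=\|\hh\|^2$. Hence $\Gamma$ maps rank-1 projections to rank-1 projections, and so induces a map $f$ from the projective space $P(\cH)$ to $P(\cK)$. Because $\Gamma$ is surjective onto $\Blin(\cK)^+$ and injective, $f$ is a bijection. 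Comparing dimensions, $n^2=\dim\Blin(\cH)=\dim\Blin(\cK)=m^2$, so $\dim\cH=\dim\cK$.

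Next, $f$ preserves transition probabilities. From \eqref{eq:switch} we get $\inpr{\bpi(\hh)}{\bpi(\hh')} = |\inpr{\hh}{\hh'}|^2$, and $\Gamma$ preserves this Hilbert--Schmidt inner product. Wigner's theorem then gives a map $U\colon\cH\to\cK$, either unitary or antiunitary, with $\Gamma(\bpi(\hh)) = \bpi(U\hh) = U\bpi(\hh)U^\dag$. The rank-1 projections span $\Blin(\cH)$ over $\Cmplx$, because every Hermitian operator is a real combination of them. If $U$ is unitary, linearity of $\Gamma$ therefore gives $\Gamma = \btheta(U)$ on all of $\Blin(\cH)$, which is the claim.

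The main obstacle is excluding the antiunitary case. In that case the linear extension $X\mapsto UX^\dag U^\dag$, which is the transpose conjugated by a unitary, agrees with $\Gamma$ on projections. I would check whether this map is actually linear. Composing with a fixed unitary, the question reduces to the transpose $t$ of \S\ref{sec:de Pillis/Jamiolkowski}, which is linear and maps positive operators onto positive operators. So the lemma as literally stated looks false when $\dim\cH\ge2$, unless $\Gamma$ is also required to be completely positive or $\btheta$ is allowed to take antiunitary arguments.

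In the intended use, $\Gamma$ is a composite of Choi maps, and the relevant cones are the $\CP$ cones. I would therefore add the hypothesis that $\Gamma\otimes\Id_{\Blin(\Cmplx^N)}$ also preserves positivity, and rule out the transpose that way. Concretely, $(t\otimes\Id)\bpi(I)$ is the swap operator, which is not positive. Apart from this point, the remaining steps are routine. To avoid citing Wigner, one could instead build $U$ directly: fix an ONB $\hh_i$, use $\Gamma(\bpi(\hh_i))=\bpi(\kk_i)$, and fix the relative phases through $\Gamma(\bpi(\hh_1+\hh_j))$.
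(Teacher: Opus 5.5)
Your argument matches the paper's proof up to Wigner's theorem. Extreme rays of $\Blin(\cH)^+$ are the $\bpi(\hh)$, $\Gamma$ preserves $\inpr{\bpi(\hh)}{\bpi(\hh')}=|\inpr{\hh}{\hh'}|^2$, Wigner gives a unitary or antiunitary $U$ with $\Gamma\cdot\bpi(\hh)=\bpi(U\hh)$, and spanning finishes the unitary case. The paper first reduces to $\cK=\cH$ via $\btheta(V^\dag)$ where you compare dimensions; that difference does not matter. At the last step you are right and the paper is not. The paper dismisses the antiunitary case with ``It must be unitary since $\Gamma$ is linear.'' But for antiunitary $U$ the map $X\mapsto UX^\dag U^{-1}$ is complex-linear, since the two antilinearities cancel. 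It is also Hilbert--Schmidt unitary, carries $\Blin(\cH)^+$ onto $\Blin(\cK)^+$, and sends $\bpi(\hh)$ to $\bpi(U\hh)$. Concretely, take $t\colon\Blin(\cH)\to\Blin(\cH^*)$ the transpose and $W\colon\cH^*\to\cH$ any unitary. Then $\btheta(W)\circ t$ satisfies every hypothesis of the lemma but is not of the form $\btheta(U)$ when $\dim\cH\ge 2$. One reason is that $\btheta(U)$ is multiplicative while $t$ reverses products. Another, as you observe, is that $(t\otimes\Id)\bpi(I_\cH)$ is the swap, which is not positive, so $t$ is not CP while every $\btheta(U)$ is. The correct conclusion is: either $\Gamma=\btheta(U)$ with $U\colon\cH\to\cK$ unitary, or $\Gamma=\btheta(U')\circ t$ with $U'\colon\cH^*\to\cK$ unitary. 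Your added hypothesis that $\Gamma\otimes\Id_{\Blin(\Cmplx^N)}$ preserves positivity correctly rules out the second case. Already $N=2$ suffices, by restricting to a two-dimensional subspace.

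One caution about your last paragraph: that hypothesis is not available where the paper uses the lemma. In the proof of Prop.~\ref{prop:Choi}, $\Gamma=\btheta(V)\circ S\circ T^\dag$. The definition of a Choi map imposes no condition involving ancillas, so nothing controls $\Gamma\otimes\Id_{\Blin(\Cmplx^N)}$. In fact Prop.~\ref{prop:Choi} inherits the defect. Let $T$ be a Choi map with carrier $\cN$ and $W\colon\cN^*\to\cN$ unitary. Then $\btheta(W)\circ t\circ T$ is again a Choi map with carrier $\cN$. Since $T$ is onto, it equals no $\btheta(U)\circ T$ once $\dim\cN\ge 2$. So the repair must carry through to the proposition. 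Either weaken Prop.~\ref{prop:Choi} to equivalence up to $\btheta(U)$ possibly preceded by a transpose, or strengthen the definition of a Choi map so that your hypothesis holds for such composites.

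Finally, your Wigner-free sketch is incomplete as written. After your phase fixing, $\Gamma(\bpi(\hh_1+\hh_j))$ only determines $\Gamma$ on the sum $\hh_1\pam{\hh_j}+\hh_j\pam{\hh_1}$. Separating the two terms requires $\Gamma(\bpi(\hh_1+i\hh_j))=\bpi(\kk_1\pm i\kk_j)$. The undetermined sign there is exactly the unitary/transpose dichotomy again.
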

\begin{proof}
For any unitary $V$, $\btheta(V)$ has the indicated property of bijectively
mapping the cones of positive operators.
By considering $\btheta(V^\dag)\circ\Gamma$ instead of $\Gamma$,
we can therefore assume that $\cK=\cH$.
  
Lemma~\ref{lem:self-dual} shows that $\Gamma$ maps $\extr\Blin(\cH)^+$ into
itself. But, by the discussion around (\ref{eq:pi}),
$\extr\Blin(\cH)^+$ is $\setof{\bpi(\hh)}{\hh\in\cH}$. 
Also, unitarity of $\Gamma$ implies that
\begin{equation}
|\inpr{\hh}{\hh'}|^2
= \inpr{\bpi(\hh)}{\bpi(\hh')}
= \inpr{\Gamma\cdot\bpi(\hh)}{\Gamma\cdot\bpi(\hh')}.
\end{equation}
Thus, we have precisely the hypotheses of Wigner's famous
``unitary/anti-unitary''
theorem\cite{%
  Bargmann-64,
  Sternberg-Group_theory_physics,
  Weinberg-QFT-I,
  Cassinelli+97,
  Simon+08%
  },
which therefore implies that there is a unitary or antiunitary $U$ such that
$\Gamma\cdot\bpi(\hh) = \bpi(U\hh) = \btheta(U)\cdot\bpi(\hh)$.
It must be unitary since $\Gamma$ is linear.
Finally, $\Gamma = \btheta(U)$ follows
because $\{\bpi(\hh)\}$ spans $\Blin(\cH)$ linearly.
\end{proof}
\begin{proof}[Proof of Prop.~\ref{prop:Choi}]
Now, assume $S$ and $T$ are as in the statement of the Proposition.
It follows that $\dim \cM = \dim\cN$. Thus, there is some unitary
\mbox{$V\colon\cN\rightarrow\cM$}.
It is easy to see that $\btheta(V)$ carries $\Blin(\cN)^+$
bijectively onto $\Blin(\cM)^+$.
So, $\btheta(V)\circ S \circ T^\dag$ preserves $\Blin(\cM)^+$,
and Lemma \ref{lem:preserves positive} then implies that
\mbox{$\btheta(V)\circ S \circ T^\dag = \btheta(W)$} for a unitary
\mbox{$W\in\Blin(\cM)$}.
That is, \mbox{$S = \btheta(V^\dag W)\circ T$}.
\end{proof}

\bibliography{cj}

@article{Grabowski+Kus+Marmo-07,
Author = {Grabowski, Janusz and Kus, Marek and Marmo, Giuseppe},
Title = {On the relation between states and maps in infinite dimensions},
Journal = {Open Systems \& Information Dynamics},
Year = {2007},
Volume = {14},
Number = {4},
Pages = {355-370},
Month = {DEC},
DOI = {10.1007/s11080-007-9061-3},
ISSN = {1230-1612},
EISSN = {1793-7191},
ResearcherID-Numbers = {Kus, Marek/N-3936-2014
   Grabowski, Janusz/K-1248-2013},
ORCID-Numbers = {Kus, Marek/0000-0002-2767-3251
   Grabowski, Janusz/0000-0001-8715-2370},
Unique-ID = {WOS:000252247700002},
}

@book {Bengtsson+Zyczkowski-Book,
    AUTHOR = {Bengtsson, Ingemar and \.Zyczkowski, Karol},
     TITLE = {Geometry of quantum states},
      NOTE = {An introduction to quantum entanglement},
 PUBLISHER = {Cambridge University Press, Cambridge},
      YEAR = {2006},
      ISBN = {978-0-521-81451-5; 0-521-81451-0},
   MRCLASS = {81-01 (47N50 53C80 58B20 81P15 81P68)},
  MRNUMBER = {2230995},
MRREVIEWER = {Paul\ B.\ Slater},
       DOI = {10.1017/CBO9780511535048},
       URL = {https://doi.org/10.1017/CBO9780511535048},
}

@article{Choi-75,
Author = {Choi, MD},
Title = {Completely Positive Linear Maps on Complex Matrices},
Journal = {Linear Algebra and its Applications},
Year = {1975},
Volume = {10},
Number = {3},
Pages = {285-290},
DOI = {10.1016/0024-3795(75)90075-0},
ISSN = {0024-3795},
Unique-ID = {WOS:A1975AF31800008},
}

@article {de-Pillis-67,
    AUTHOR = {de Pillis, John},
     TITLE = {Linear transformations which preserve hermitian and positive
              semidefinite operators},
   JOURNAL = {Pacific J. Math.},
  FJOURNAL = {Pacific Journal of Mathematics},
    VOLUME = {23},
      YEAR = {1967},
     PAGES = {129--137},
      ISSN = {0030-8730,1945-5844},
   MRCLASS = {15.40 (47.00)},
  MRNUMBER = {222101},
MRREVIEWER = {H.\ Minc},
       URL = {http://projecteuclid.org/euclid.pjm/1102991990},
}

@article {Jamiolkowski-72,
    AUTHOR = {Jamio{\l}kowski, A.},
     TITLE = {Linear transformations which preserve trace and positive
              semidefiniteness of operators},
   JOURNAL = {Rep. Mathematical Phys.},
  FJOURNAL = {Reports on Mathematical Physics},
    VOLUME = {3},
      YEAR = {1972},
    NUMBER = {4},
     PAGES = {275--278},
      ISSN = {0034-4877},
       DOI = {10.1016/0034-4877(72)90011-0},
       URL = {https://doi.org/10.1016/0034-4877(72)90011-0},
}

@book {Vacchini-Foundations,
    AUTHOR = {Vacchini, Bassano},
     TITLE = {Open quantum systems---foundations and theory},
    SERIES = {Graduate Texts in Physics},
 PUBLISHER = {Springer, Cham},
      YEAR = {2024},
      ISBN = {978-3-031-58217-2; 978-3-031-58218-9},
       DOI = {10.1007/978-3-031-58218-9},
       URL = {https://doi.org/10.1007/978-3-031-58218-9},
}

@book {Benatti-23,
    AUTHOR = {Benatti, Fabio},
     TITLE = {Dynamics, information and complexity in quantum systems},
    SERIES = {Theoretical and Mathematical Physics},
   EDITION = {Second},
 PUBLISHER = {Springer, Cham},
      YEAR = {2023},
      ISBN = {978-3-031-34247-9; 978-3-031-34248-6},
   MRCLASS = {82C10 (37Axx 37Bxx 46N50 47N50 81P68 82-02 82C03)},
  MRNUMBER = {4647499},
       DOI = {10.1007/978-3-031-34248-6},
       URL = {https://doi.org/10.1007/978-3-031-34248-6},
}

@book {Wilde,
    AUTHOR = {Wilde, Mark M.},
     TITLE = {Quantum information theory},
   EDITION = {Second},
 PUBLISHER = {Cambridge University Press, Cambridge},
      YEAR = {2017},
      ISBN = {978-1-107-17616-4},
   MRCLASS = {81-01 (81-02 81P45 94A17)},
  MRNUMBER = {3645110},
       DOI = {10.1017/9781316809976},
       URL = {https://doi.org/10.1017/9781316809976},
}

@article{Jiang+Luo+Fu-13,
  title = {Channel-state duality},
  author = {Jiang, Min and Luo, Shunlong and Fu, Shuangshuang},
  journal = {Phys. Rev. A},
  volume = {87},
  issue = {2},
  pages = {022310},
  numpages = {8},
  year = {2013},
  month = {Feb},
  publisher = {American Physical Society},
  doi = {10.1103/PhysRevA.87.022310},
  url = {https://link.aps.org/doi/10.1103/PhysRevA.87.022310},
}

@article{Homa+Ortega+Koniorczyk-24,
Author = {Homa, Gabor and Ortega, Adrian and Koniorczyk, Matyas},
Title = {Choi representation of completely positive maps in brief},
Journal = {Zeitschrift f\"{u}r Naturforschung Section A-A Journal of Physical Sciences},
Year = {2024},
Volume = {79},
Number = {12},
Pages = {1123-1133},
Month = {DEC 17},
DOI = {10.1515/zna-2024-0144},
EarlyAccessDate = {OCT 2024},
ISSN = {0932-0784},
EISSN = {1865-7109},
ResearcherID-Numbers = {/AAE-6989-2021
   Koniorczyk, Mátyás/E-1373-2011},
ORCID-Numbers = {Ortega, Aian/0000-0001-5816-7274
   Koniorczyk, Mátyás/0000-0002-2710-493X},
Unique-ID = {WOS:001333930100001},
}

@article{Zyczkowski+Bengtsson-04,
Author = {Zyczkowski, K and Bengtsson, I},
Title = {On duality between quantum maps and quantum states},
Journal = {Open Systems \& Information Dynamics},
Year = {2004},
Volume = {11},
Number = {1},
Pages = {3-42},
Month = {MAR},
DOI = {10.1023/B:OPSY.0000024753.05661.c2},
ISSN = {1230-1612},
EISSN = {1793-7191},
ResearcherID-Numbers = {Zyczkowski, Karol/M-6465-2019},
ORCID-Numbers = {Bengtsson, Ingemar/0000-0002-4203-3180
   Zyczkowski, Karol/0000-0002-0653-3639},
Unique-ID = {WOS:000220488100002},
}

@article{Arrighi+Patricot-04,
	title = {On quantum operations as quantum states},
	volume = {311},
	issn = {0003-4916},
	url = {https://www.sciencedirect.com/science/article/pii/S0003491603002707},
	doi = {10.1016/j.aop.2003.11.005},
	number = {1},
	urldate = {2026-07-17},
	journal = {Annals of Physics},
	author = {Arrighi, Pablo and Patricot, Christophe},
	month = may,
	year = {2004},
	pages = {26--52},
}

@article{Frembs+Cavalcanti-24,
	title = {Variations on the {Choi}–{Jamio{\l}kowski} isomorphism},
	volume = {57},
	issn = {1751-8113, 1751-8121},
	url = {https://iopscience.iop.org/article/10.1088/1751-8121/ad5394},
	doi = {10.1088/1751-8121/ad5394},
	number = {26},
	urldate = {2026-07-14},
	journal = {Journal of Physics A: Mathematical and Theoretical},
	author = {Frembs, Markus and Cavalcanti, Eric G},
	month = jun,
	year = {2024},
	pages = {265301},
}

@misc{Preskill-course,
note={Lecture notes for Caltech Physics/Computer Science 219},
url={https://www.preskill.caltech.edu/ph229/},
}

@article{Ranade+Ali-07,
Author = {Ranade, Kedar S. and Ali, Mazhar},
Title = {The Jamiolkowski isomorphism and a simplified proof for the
   correspondence between vectors having Schmidt number k and
   k-positive maps},
Journal = {Open Systems \& Information Dynamics},
Year = {2007},
Volume = {14},
Number = {4},
Pages = {371-378},
Month = {DEC},
DOI = {10.1007/s11080-007-9062-2},
ISSN = {1230-1612},
EISSN = {1793-7191},
ORCID-Numbers = {Ali, Mazhar/0000-0002-3708-1970},
Unique-ID = {WOS:000252247700003},
}

@book {Sternberg-Group_theory_physics,
    AUTHOR = {Sternberg, S.},
     TITLE = {Group theory and physics},
 PUBLISHER = {Cambridge University Press, Cambridge},
      YEAR = {1994},
      ISBN = {0-521-24870-1},
   MRCLASS = {20-01 (20C30 20C35 22E46)},
  MRNUMBER = {1287387},
MRREVIEWER = {J.\ L\~ohmus},
}

@article{Cassinelli+97,
Author = {Cassinelli, G and De Vito, E and Lahti, PJ and Levrero, A},
Title = {Symmetry groups in quantum mechanics and the theorem of Wigner on the
   symmetry transformations},
Journal = {Reviews in Mathematical Physics},
Year = {1997},
Volume = {9},
Number = {8},
Pages = {921-941},
Month = {NOV},
DOI = {10.1142/S0129055X97000324},
ISSN = {0129-055X},
ResearcherID-Numbers = {De Vito, Ernesto/K-6354-2015},
ORCID-Numbers = {De Vito, Ernesto/0000-0002-4320-3292},
Unique-ID = {WOS:000071341100002},
}

@article{Bargmann-64,
Author = {Bargmann, V},
Title = {Note on Wigner's Theorem on Symmetry Operations},
Journal = {Journal of Mathematical Physics},
Year = {1964},
Volume = {5},
Number = {7},
Pages = {862-\&},
DOI = {10.1063/1.1704188},
ISSN = {0022-2488},
Unique-ID = {WOS:A19645659B00002},
}

@book {Weinberg-QFT-I,
    AUTHOR = {Weinberg, Steven},
     TITLE = {The quantum theory of fields. {V}ol. {I}},
 PUBLISHER = {Cambridge University Press, Cambridge},
      YEAR = {2005},
      ISBN = {0-521-55001-7; 978-0-521-67053-1; 0-521-67053-5},
   MRCLASS = {81-01 (81-02)},
  MRNUMBER = {2148466},
}

@article {Eilenberg+MacLane-45,
    AUTHOR = {Eilenberg, Samuel and MacLane, Saunders},
     TITLE = {General theory of natural equivalences},
   JOURNAL = {Trans. Amer. Math. Soc.},
  FJOURNAL = {Transactions of the American Mathematical Society},
    VOLUME = {58},
      YEAR = {1945},
     PAGES = {231--294},
      ISSN = {0002-9947,1088-6850},
   MRCLASS = {09.1X},
  MRNUMBER = {13131},
       DOI = {10.2307/1990284},
       URL = {https://doi.org/10.2307/1990284},
}

@misc{Coecke+Kissinger-16a,
      title={Categorical Quantum Mechanics I: Causal Quantum Processes}, 
      author={Bob Coecke and Aleks Kissinger},
      year={2016},
      eprint={1510.05468},
      archivePrefix={arXiv},
      primaryClass={quant-ph},
      url={https://arxiv.org/abs/1510.05468}, 
}

@article{Coecke+Kissinger-16b,
Author = {Coecke, Bob and Kissinger, Aleks},
Title = {Categorical quantum mechanics II: Classical-quantum interaction},
Journal = {International Journal of Quantum Information},
Year = {2016},
Volume = {14},
Number = {4, SI},
Month = {June},
DOI = {10.1142/S0219749916400207},
Article-Number = {1640020},
ISSN = {0219-7499},
EISSN = {1793-6918},
ORCID-Numbers = {Kissinger, Aleks/0000-0002-6090-9684},
Unique-ID = {WOS:000383806200021},
}

@article{Barnum+Duncan+Wilce-13,
Author = {Barnum, Howard and Duncan, Ross and Wilce, Alexander},
Title = {Symmetry, Compact Closure and Dagger Compactness for Categories of
   Convex Operational Models},
Journal = {Journal of Philosophical Logic},
Year = {2013},
Volume = {42},
Number = {3},
Pages = {501-523},
Month = {June},
DOI = {10.1007/s10992-013-9280-8},
ISSN = {0022-3611},
EISSN = {1573-0433},
ResearcherID-Numbers = {Duncan, Ross/M-6828-2015},
ORCID-Numbers = {Barnum, Howard/0000-0003-2336-6147
   Duncan, Ross/0000-0001-6758-1573},
Unique-ID = {WOS:000319298300005},
}

@incollection{Abramsky+Coecke-09,
Author = {Abramsky, Samson and Coecke, Bob},
Editor = {Engesser, K and Gabbay, DM and Lehmann, D},
Title = {Categorical Quantum Mechanics},
Booktitle = {Handbook of Quantum Logic and Quantum Structures: Quantum Logic},
Year = {2009},
Pages = {261-323},
DOI = {10.1016/B978-0-444-52869-8.50010-4},
ISBN = {978-0-08-093166-1; 978-0-444-52869-8},
ORCID-Numbers = {Abramsky, Samson/0000-0003-3921-6637},
Unique-ID = {WOS:000321979700009},
}

@article{Simon+08,
Author = {Simon, R. and Mukunda, N. and Chaturvedi, S. and Srinivasan, V.},
Title = {Two elementary proofs of the Wigner theorem on symmetry in quantum
   mechanics},
Journal = {Physics Letters A},
Year = {2008},
Volume = {372},
Number = {46},
Pages = {6847-6852},
Month = {NOV 17},
DOI = {10.1016/j.physleta.2008.09.052},
ISSN = {0375-9601},
EISSN = {1873-2429},
ResearcherID-Numbers = {/F-3260-2011},
Unique-ID = {WOS:000261757700007},
}

@article{Milz+Pollock+Modi-17,
author = {Milz, Simon and Pollock, Felix A. and Modi, Kavan},
title = {An Introduction to Operational Quantum Dynamics},
journal = {Open Systems \& Information Dynamics},
volume = {24},
number = {04},
pages = {1740016},
year = {2017},
doi = {10.1142/S1230161217400169},

URL ={https://doi.org/10.1142/S1230161217400169
},
eprint = {https://doi.org/10.1142/S1230161217400169},
}

@article{Pollock+18,
  title = {Non-Markovian quantum processes: Complete framework and efficient characterization},
  author = {Pollock, Felix A. and Rodr{\'{\i}}guez-Rosario, C{\'e}sar and Frauenheim, Thomas and Paternostro, Mauro and Modi, Kavan},
  journal = {Phys. Rev. A},
  volume = {97},
  issue = {1},
  pages = {012127},
  numpages = {13},
  year = {2018},
  month = {Jan},
  publisher = {American Physical Society},
  doi = {10.1103/PhysRevA.97.012127},
  url = {https://link.aps.org/doi/10.1103/PhysRevA.97.012127}
}

@article{Majewski+Tylec-13b,
  title = {Comment on ``Channel-state duality''},
  author = {Majewski, W\l{}adys\l{}aw A. and Tylec, Tomasz I.},
  journal = {Phys. Rev. A},
  volume = {88},
  issue = {2},
  pages = {026301},
  numpages = {2},
  year = {2013},
  month = {Aug},
  publisher = {American Physical Society},
  doi = {10.1103/PhysRevA.88.026301},
  url = {https://link.aps.org/doi/10.1103/PhysRevA.88.026301},
}

@misc{Lammert-26b,
      title={Existence of Kraus decomposition in infinite dimension via strongly-convergent direct process tomography}, 
      author={Paul E. Lammert},
      year={2026},
      eprint={2608.07207},
      archivePrefix={arXiv},
      primaryClass={quant-ph},
      url={https://arxiv.org/abs/2608.07207}, 
}

@book {Pierce-91,
    AUTHOR = {Pierce, Benjamin C.},
     TITLE = {Basic category theory for computer scientists},
    SERIES = {Foundations of Computing Series},
 PUBLISHER = {MIT Press, Cambridge, MA},
      YEAR = {1991},
      ISBN = {0-262-66071-7},
   MRCLASS = {18-02 (68-02)},
  MRNUMBER = {1120026},
MRREVIEWER = {Jack\ Weinstein},
}

@book {MacLane-71,
    AUTHOR = {MacLane, Saunders},
     TITLE = {Categories for the working mathematician},
    SERIES = {Graduate Texts in Mathematics},
    VOLUME = {Vol. 5},
 PUBLISHER = {Springer-Verlag, New York-Berlin},
      YEAR = {1971},
   MRCLASS = {18-02},
  MRNUMBER = {354798},
MRREVIEWER = {H.-B.\ Brinkmann},
}

@book {Awodey-06,
    AUTHOR = {Awodey, Steve},
     TITLE = {Category theory},
    SERIES = {Oxford Logic Guides},
    VOLUME = {49},
 PUBLISHER = {The Clarendon Press, Oxford University Press, New York},
      YEAR = {2006},
      ISBN = {0-19-856861-4; 978-0-19-856861-2},
   MRCLASS = {18-01 (03G30 18A25)},
  MRNUMBER = {2229319},
MRREVIEWER = {Colin\ McLarty},
       DOI = {10.1093/acprof:oso/9780198568612.001.0001},
       URL = {https://doi.org/10.1093/acprof:oso/9780198568612.001.0001},
}

@book {Riehl-16,
    AUTHOR = {Riehl, Emily},
     TITLE = {Category theory in context},
    SERIES = {Aurora Dover Modern Math Originals},
 PUBLISHER = {Dover Publications, Inc., Mineola, NY},
      YEAR = {2016},
      ISBN = {978-0-486-80903-8; 0-486-80903-X},
   MRCLASS = {18-01 (18Axx 97H99)},
  MRNUMBER = {4727501},
}
\end{document}